\keywords{Petri nets, games, synthesis, partial order reduction, stubborn sets}
\newcommand{\trailingspace}[1]{#1\xspace} 
\newcolumntype{L}[1]{>{\raggedright\let\newline\\\arraybackslash\hspace{0pt}}m{#1}}
\newcolumntype{C}[1]{>{\centering\let\newline\\\arraybackslash\hspace{0pt}}m{#1}}
\newcolumntype{R}[1]{>{\raggedleft\let\newline\\\arraybackslash\hspace{0pt}}m{#1}}
\newcommand{\PNG}{\trailingspace{Petri net game}} 
\newcommand{\postset}[1]{#1^{\bullet}}
\newcommand{\preset}[1]{{}^{\bullet}#1}
\newcommand{\inhibpostset}[1]{#1^{\circ}}
\newcommand{\inhibpreset}[1]{{}^{\circ}#1}
\newcommand{\preincr}[1]{{}^{+}#1}
\newcommand{\predecr}[1]{{}^{-}#1}
\newcommand{\postdecr}[1]{{#1}^{-}}
\newcommand{\postincr}[1]{{#1}^{+}}
\newcommand{\goal}{\ensuremath{\mathit{Goal}}\xspace} 
\newcommand{\safe}[1]{\ensuremath{\mathit{safe}(#1)}\xspace}
\newcommand{\interest}[2]{\ensuremath{A_{#1}(#2)}\xspace}
\newcommand{\stratnext}[2]{\ensuremath{\mathit{next}_{#1}(#2)}\xspace}
\NewDocumentCommand{\rtrans}{oo}
{\IfNoValueTF{#1}
        {\ensuremath{\xrightarrow{}}\xspace}
        {\IfNoValueTF{#2}
            {\ensuremath{\xrightarrow{#1}}\xspace}
            {\ensuremath{\xrightarrow[#2]{#1}}\xspace}}}
\NewDocumentCommand{\rntrans}{oo}
{\IfNoValueTF{#1}
        {\ensuremath{\centernot{\xrightarrow{}}}\xspace}
        {\IfNoValueTF{#2}
            {\ensuremath{\centernot{\xrightarrow{#1}}}\xspace}
            {\ensuremath{\centernot{\xrightarrow[#2]{#1}}}\xspace}}}
\NewDocumentCommand{\lntrans}{oo}
{\IfNoValueTF{#1}
        {\ensuremath{\centernot{\leftarrow{}}}\xspace}
        {\IfNoValueTF{#2}
            {\ensuremath{\centernot{\leftarrow{#1}}}\xspace}
            {\ensuremath{\centernot{\leftarrow[#2]{#1}}}\xspace}}}
\NewDocumentCommand{\interesting}{oo}
{\IfNoValueTF{#1}
        {\ensuremath{A}\xspace}
        {\IfNoValueTF{#2}
            {\ensuremath{A_{#1}}\xspace}
            {\ensuremath{A_{#1}(#2)}\xspace}}}
\NewDocumentCommand{\outcome}{oo}
{\IfNoValueTF{#1}
        {\ensuremath{\mathit{Outcome}}\xspace}
        {\IfNoValueTF{#2}
            {\ensuremath{\mathit{Outcome}(#1)}\xspace}
            {\ensuremath{\mathit{Outcome}_(#1,#2)}\xspace}}}
\NewDocumentCommand{\incr}{oo}
{\IfNoValueTF{#1}
        {\ensuremath{\mathit{incr}}\xspace}
        {\IfNoValueTF{#2}
            {\ensuremath{\mathit{incr}_{#1}}\xspace}
            {\ensuremath{\mathit{incr}_{#1}(#2)}\xspace}}}
\NewDocumentCommand{\decr}{oo}
{\IfNoValueTF{#1}
        {\ensuremath{\mathit{decr}}\xspace}
        {\IfNoValueTF{#2}
            {\ensuremath{\mathit{decr}_{#1}}\xspace}
            {\ensuremath{\mathit{decr}_{#1}(#2)}\xspace}}}
\NewDocumentCommand{\eval}{oo}
{\IfNoValueTF{#1}
        {\ensuremath{\mathit{eval}}\xspace}
        {\IfNoValueTF{#2}
            {\ensuremath{\mathit{eval}_{#1}}\xspace}
            {\ensuremath{\mathit{eval}_{#1}(#2)}\xspace}}}
\NewDocumentCommand{\simplify}{o}
    {\IfNoValueTF{#1}
        {\ensuremath{\mathit{simplify}}\xspace}
        {\ensuremath{\mathit{simplify}(#1)}\xspace}}
\newcommand{\pathlen}{\ensuremath{\ell}\xspace}
\newcommand{\paths}{\ensuremath{\Pi}\xspace}
\newcommand{\maxpaths}{\ensuremath{\Pi^{\mathit{max}}}\xspace} 
\newcommand{\places}{\ensuremath{P}\xspace} 
\newcommand{\transitions}{\ensuremath{T}\xspace} 
\newcommand{\weights}{\ensuremath{W}\xspace} 
\newcommand{\markings}{\ensuremath{\mathcal{M}(N)}\xspace} 
\newcommand{\inhib}{\ensuremath{I}\xspace} 
\newcommand{\petrituple}{\ensuremath{(\places,\transitions_1, \transitions_2, \weights,\inhib)}\xspace} 
\NewDocumentCommand{\reduction}{o} 
    {\IfNoValueTF{#1}
        {\ensuremath{St}\xspace}
        {\ensuremath{St(#1)}\xspace}}
\NewDocumentCommand{\comp}{o} 
    {\IfNoValueTF{#1}
        {\ensuremath{\mathit{comp}}\xspace}
        {\ensuremath{\mathit{comp}(#1)}\xspace}}
\newcommand{\lts}{\ensuremath{G}\xspace} 
\newcommand{\ltsstates}{\ensuremath{\mathcal{S}}\xspace} 
\newcommand{\ltslabels}{\ensuremath{A}\xspace} 
\newcommand{\ltsedges}{\ensuremath{\mathcal{\rightarrow}}\xspace} 
\newcommand{\deadlock}{\ensuremath{\mathit{deadlock}}\xspace}
\newcommand{\true}{\ensuremath{\mathit{true}}\xspace}
\newcommand{\false}{\ensuremath{\mathit{false}}\xspace}
\newcommand{\en}{\ensuremath{en}\xspace} 
\def\namedlabel#1#2{\begingroup
   \def\@currentlabel{#2}%
   \label{#1}\endgroup
}
\newcommand*{\da@rightarrow}{\mathchar"0\hexnumber@\symAMSa 4B } 
\newcommand*{\da@leftarrow}{\mathchar"0\hexnumber@\symAMSa 4C } 
\newcommand*{\xdashrightarrow}[2][]{%
  \mathrel{%
    \mathpalette{\da@xarrow{#1}{#2}{}\da@rightarrow{\,}{}}{}%
  }%
}
\newcommand{\xdashleftarrow}[2][]{%
  \mathrel{%
    \mathpalette{\da@xarrow{#1}{#2}\da@leftarrow{}{}{\,}}{}%
  }%
}
\newcommand*{\da@xarrow}[7]{%
  \sbox0{$\ifx#7\scriptstyle\scriptscriptstyle\else\scriptstyle\fi#5#1#6\m@th$}%
  \sbox2{$\ifx#7\scriptstyle\scriptscriptstyle\else\scriptstyle\fi#5#2#6\m@th$}%
  \sbox4{$#7\dabar@\m@th$}%
  \dimen@=\wd0 %
  \ifdim\wd2 >\dimen@ 
    \dimen@=\wd2 %
  \fi
  \count@=2 %
  \def\da@bars{\dabar@\dabar@}%
  \@whiledim\count@\wd4<\dimen@\do{
    \advance\count@\@ne 
    \expandafter\def\expandafter\da@bars\expandafter{%
      \da@bars 
      \dabar@ 
    }%
  }%
  \mathrel{#3}%
  \mathrel{%
    \mathop{\da@bars}\limits 
    \ifx\\#1\\%
    \else
      _{\copy0}%
    \fi
    \ifx\\#2\\%
    \else
      ^{\copy2}%
    \fi
  }%
  \mathrel{#4}%
}
\newcommand{\ubound}{\mathit{ub}}
\newcommand{\lbound}{\mathit{lb}}
\theoremstyle{plain}
\keywords{Petri nets, games, synthesis, partial order reduction, stubborn sets}
\newcommand{\mathindex}[2]{\ensuremath{{#1}_{#2}}}
\begin{document}

\title[Stubborn Set Reduction for Two-Player Reachability Games]
{Stubborn Set Reduction for \texorpdfstring{\\}{} Two-Player Reachability Games}

\author[F.M.~B{\o}nneland]{Frederik~Meyer~B{\o}nneland}
\author[P.G.~Jensen]{Peter~Gj{\o}l~Jensen}
\author[K.G.~Larsen]{Kim~Guldstrand~Larsen}
\author[M.~Mu\~{n}iz]{Marco~Mu\~{n}iz}
\author[J.~Srba]{\texorpdfstring{Ji\v{r}\'{\i}}{Jir\'{\i}}~Srba\texorpdfstring{\vspace{-1cm}}{}}

\address{Department of Computer Science, Aalborg University, Denmark}
\email{\{frederikb,pgj,kgl,muniz,srba\}@cs.aau.dk}

\begin{abstract}
Partial order reductions have been successfully applied
to model checking of concurrent systems and practical applications
of the technique show nontrivial reduction in the size of the
explored state space.
We present a theory of partial order reduction based on stubborn sets
in the game-theoretical
setting of 2-player games with reachability objectives.
Our stubborn reduction allows us to prune the interleaving behaviour of
both players in the game, and we formally prove its correctness on the
class of games played on general labelled transition systems.
We then instantiate the framework to the class of
weighted Petri net games with inhibitor arcs and provide its efficient
implementation in the model checker TAPAAL\@. Finally, we evaluate our stubborn
reduction on several case studies and demonstrate its efficiency.
\end{abstract}

\maketitle

\section{Introduction}
The state space explosion problem is the main obstacle for model checking
of concurrent systems. Even simple processes running in parallel
can produce an exponentially large number of interleavings,
making full state space search practically intractable.
A family of methods for taming this problem is that of
partial order reductions~\cite{G:96,peled1993stubborn,V:90} by exploiting the commutativity of independent concurrent processes.
Variants of partial order reductions include persistent
sets~\cite{G:96,godefroid1990using,godefroid1993using},
ample sets~\cite{peled1993stubborn,peled1996combining,peled1998ten},
and stubborn sets~\cite{V:90,valmari1992attack,valmari1993partial,valmari2017stubborn}.

As our main contribution,
we generalise the theory of the stubborn set variant of partial order reductions into
the setting of 2-player games.
We exploit the observation that either of the two players often is
left with no actions to propose, leaving the opponent to independently
dictate the behavior of the system for a limited, consecutive sequence of actions.
In such cases we may
apply the classical stubborn set reductions in order to
reduce the number of interleavings of independent actions.
To preserve the winning strategies of both players,
 a number of conditions of the reduction has to be satisfied.
 We define the notion of a
\emph{stable} stubborn set reduction by a set of sufficient conditions that guarantee the preservation
of winning strategies for both players.
Furthermore, we formally prove the correctness of stable reductions
in the setting of general game labelled transition systems,
and instantiate our framework to weighted
Petri net games with inhibitor arcs.
We propose approximate syntax-driven conditions of a stable Petri net game reduction
satisfying the sufficient conditions for our stable reductions
and demonstrate their applicability in an efficient, open source implementation in the model
checker TAPAAL~\cite{david2012tapaal}.
Our implementation is based on
dependency graphs, following the approach from~\cite{DEFJJJKLNOPS:FI:18,jensen2016real},
and we demonstrate on  several case studies  that the computation of the stubborn sets
only has a minor overhead while having the potential of achieving exponential
reduction both in the running time as well as in
the number of searched configurations.
To the best of our knowledge, this is the first efficient implementation of a 2-player game partial
order reduction technique for Petri net games.

\emph{Related Work.}
Partial order reductions in the non-game setting
for linear time properties have previously been studied~\cite{laarman2014real,lehmann2012stubborn,peled1993stubborn,valmari1992attack} which lends itself towards the safety or liveness properties we wish to preserve for winning states.
Originally Peled and Valmari presented partial order reductions for general stuttering-insensitive LTL~\cite{peled1993stubborn,valmari1992attack} and
Lehmann et al.\ subsequently studied stubborn sets applied to a subset of LTL properties, called simple linear time properties, allowing them to utilise a relaxed set of conditions compared to those for general LTL preservation~\cite{lehmann2012stubborn}.

The extension of partial order reductions to game-oriented formalisms and verification tasks has not yet received much attention in the literature.
In~\cite{jamroga2018towards} partial order reductions for LTL without
the next operator are adapted to a subset of alternating-time temporal logic
and applied to multi-agent systems.
The authors consider games with imperfect information, however, they also show that their technique is inapplicable for perfect information games.
In our work, we assume an antagonistic environment and focus on preserving
the existence of winning strategies with perfect information, reducing the state space, and improving existing controller synthesis algorithms.
Partial order reduction for the problem of checking bisimulation equivalence between two labelled transition systems is presented in~\cite{valmari1997set,huhn1998partial,gerth1999partial}.
Our partial order reduction is applied directly to a labelled transition system while theirs are applied to the bisimulation game graph.
While the setting is distinctly different, our approach is more general as we allow for mixed states
and allow for reduction in both controllable as well
as environmental states.
Moreover, we provide an implementation of the
on-the-fly strategy synthesis algorithm and argue by a number of case
studies for its practical applicability.

The work on partial order reductions for weak modal $\mu$-calculus and CTL
(see e.g.~\cite{RS:CONCUR:97,561357})
allows us in principle to encode the game semantics as a part of a
$\mu$-calculus formula.
Recently, partial order reduction techniques for parity games have been proposed by Neele et al.~\cite{neele2020partial}, which allows for model checking the full modal $\mu$-calculus.
However, the use of more general partial order reduction methods may waste reduction potential, as the more general methods usually generate larger stubborn sets to preserve properties that are not required in the 2-player game setting.

Complexity and decidability results for control synthesis in Petri net games are not encouraging.
The control synthesis problem is for many instances of Petri net formalisms undecidable~\cite{alechina2016complexity,berard2012concurrent}, including those that allow for inhibition~\cite{berard2012concurrent} which we utilise to model our case studies.
If the problem is decidable for a given instance of a Petri net formalism
(like e.g.\ for bounded nets) then it is usually of high computational complexity.
In fact, most questions about the behaviour of bounded Petri nets are
at least PSPACE-hard~\cite{esparza1998decidability}.
We opt to use efficient overapproximation algorithms using both syntactic and local state information to generate stable stubborn sets.

The work presented in this article is an extended version with full proofs of
our conference paper~\cite{boenneland2019partial}. The stubborn set
conditions presented in~\cite{boenneland2019partial} were insufficient
in order to guarantee the preservation of reachability while condition {\bf C}
from the conference paper was found to be redundant.  These issues are fixed,
and in the present article we add an additional visibility condition on
player 2 actions and we elaborate on its syntax-based
algorithmic overapproximation for the Petri net
games. The implementation is accordingly fixed and the efficiency of the
method is still confirmed on an extended set of case studies compared
to~\cite{boenneland2019partial}.

\section{Preliminaries}

We shall first introduce the basic notation and definitions.

\begin{defi}[Game Labelled Transition System]
	A (deterministic) Game Labelled Transition System (GLTS) is a tuple $\lts = (\ltsstates, \ltslabels_1, \ltslabels_2, \ltsedges, \goal)$ where
\begin{itemize}
\item $\ltsstates$ is a set of states,
\item $\ltslabels_1$ is a finite set of actions for player~$1$ (the controller),
\item $\ltslabels_2$ is a finite set of actions for player~$2$ (the environment)
where $\ltslabels_1 \cap \ltslabels_2 = \emptyset$
and $\ltslabels = \ltslabels_1 \cup \ltslabels_2$,
\item $\ltsedges \subseteq \ltsstates \times \ltslabels \times \ltsstates$ is a
transition relation such that if $(s,a,s') \in \ltsedges$ and $(s,a,s'') \in \ltsedges$ then $s' = s''$, and
\item $\goal \subseteq \ltsstates$ is a set of goal states.
\end{itemize}
\end{defi}

\noindent
Let $\lts = (\ltsstates, \ltslabels_1, \ltslabels_2, \ltsedges, \goal)$ be a fixed GLTS for the remainder of the section.
Whenever $(s,a,s') \in \ltsedges$ we write $s \rtrans[a] s'$ and say that $a$ is enabled in $s$ and can be \emph{executed} in $s$ yielding $s'$.
Otherwise we say that $a$ is \emph{disabled} in $s$.
The set of \emph{enabled} player~$i$ actions where $i \in \{1,2\}$ in a state $s \in \ltsstates$ is given by $\en_i(s) = \{ a \in \ltslabels_i \mid \exists s' \in \ltsstates.\ s \rtrans[a] s'\}$.
The set of all enabled actions is given by $\en(s) = \en_1(s) \cup \en_2(s)$.
For a state $s \in \ltsstates$ where $\en(s) \neq \emptyset$ if $\en_2(s) = \emptyset$ then we call $s$ a player~$1$ state, if $\en_1(s) = \emptyset$ then we call $s$ a player~$2$ state, and otherwise we call it a \emph{mixed} state.
If $\en(s) = \emptyset$ then we call $s$ a \emph{deadlock} state.
The GLTS $\lts$ is called \emph{non-mixed} if all states are either player~$1$, player~$2$, or deadlock states.

For a sequence of actions $w = a_1 a_2 \cdots a_n \in \ltslabels^*$ we write $s \rtrans[w] s'$ if $s \rtrans[a_1] s_1 \rtrans[a_2] \cdots \rtrans[a_n] s'$ and say it is \emph{executable}.
If $w \in \ltslabels^{\omega}$, i.e.\ if it is infinite, then we write $s \rtrans[w]$.
Actions that are a part of $w$ are said to occur in $w$.
A sequence of states induced by $w \in \ltslabels^* \cup \ltslabels^{\omega}$ is called a \emph{run} and is written as $\pi = s_0 s_1\cdots$.
We use $\paths_{\lts}(s)$ to denote the set of all runs starting from a state $s \in \ltsstates$ in GLTS $\lts$, s.t.\ for all $s_0 s_1 \cdots \in \paths_G(s)$ we have $s_0 = s$, and $\paths_{\lts} = \bigcup_{s \in \ltsstates} \paths_{\lts}(s)$ as the set of all runs.
The number of actions in a run $\pi$ is given by the function $\pathlen : \paths_G \to \mathbb{N}^0 \cup \{ \infty \}$ s.t.\ for a run $\pi = s_0 \cdots s_n$ we have $\pathlen(\pi) = n$ if $\pi$ is finite and otherwise $\pathlen(\pi) = \infty$.
A position in a run $\pi = s_0 s_1 \ldots \in \paths_{\lts}(s)$ is a natural number $i \in \mathbb{N}^0$ that refers to the state $s_i$ and is written as $\pi_i$.
A position $i$ can range from $0$ to $\pathlen(\pi)$ s.t.\ if $\pi$ is infinite then $i \in \mathbb{N}^0$ and otherwise $0 \leq i \leq \pathlen(\pi)$.
Let $\maxpaths_{\lts}(s)$ be the set of all maximal runs starting from $s$, defined as $\maxpaths_{\lts}(s) = \{ \pi \in \paths_{\lts}(s) \mid \pathlen(\pi) \neq \infty \implies \en(\pi_{\pathlen(\pi)}) = \emptyset \}$.
We omit the GLTS $\lts$ from the subscript of run sets if it is clear from the context.

A reduced game is defined by a function called a reduction.
\begin{defi}[Reduction]
	Let $\lts = (\ltsstates, \ltslabels_1, \ltslabels_2, \ltsedges, \goal)$ be a GLTS\@.
	A \emph{reduction} is a function $\reduction : \ltsstates \to 2^\ltslabels$.
\end{defi}

\begin{defi}[Reduced Game]
	Let $\lts = (\ltsstates, \ltslabels_1, \ltslabels_2, \ltsedges, \goal)$ be a GLTS and $\reduction$ be a reduction.
	The \emph{reduced game} of $\lts$ by the reduction $\reduction$ is given by $\lts_{\reduction} = (\ltsstates, \ltslabels_1, \ltslabels_2, \rtrans[][\reduction], \goal)$ where $s \rtrans[a][\reduction] s'$ iff $s \rtrans[a] s'$ and $a \in \reduction(s)$.
\end{defi}

The set of actions $\reduction(s)$ is the \emph{stubborn set} of $s$ with the reduction $\reduction$.
The set of non-stubborn actions for $s$ is defined as $\overline{\reduction(s)} = \ltslabels \setminus \reduction(s)$.

A (memoryless) strategy is a function that proposes the next action player~$1$
wants to execute.
\begin{defi}[Strategy]
	Let $\lts = (\ltsstates, \ltslabels_1, \ltslabels_2, \ltsedges, \goal)$ be a GLTS\@.
	A \emph{strategy} is a function $\sigma: \ltsstates \to \ltslabels_1 \cup \{ \bot \}$ where for all $s \in \ltsstates$ we have that if $\en_1(s) \neq \emptyset$ then $\sigma(s) \in \en_1(s)$ else $\sigma(s) = \bot$.
\end{defi}
The intuition is that in order
to ensure progress, player~$1$ always has to propose an action
if she has an enabled action.
Let $\sigma$ be a fixed strategy for the remainder of the section.
We define a function $\stratnext{\sigma}{s}$ that returns the set of actions considered at $s \in \ltsstates$ under $\sigma$ as:
\[
\stratnext{\sigma}{s} =
\begin{cases}
	\en_2(s) \cup \sigma(s)\ &\text{if } \sigma(s) \neq \bot \\
	\en_2(s)\ &\text{otherwise.}
\end{cases}
\]
Let $\maxpaths_{\sigma}(s) \subseteq \maxpaths(s)$ be the set of maximal runs subject to $\sigma$ starting at $s \in \ltsstates$, defined as:
\[\maxpaths_{\sigma}(s) = \{\pi \in \maxpaths(s) \mid \forall i \in \{1,\ldots,\pathlen(\pi)\}.\ \exists a \in \stratnext{\sigma}{\pi_{i-1}}.\ \pi_{i-1} \rtrans[a] \pi_{i}\} \ .
\]

\begin{defi}[Winning Strategy]
	Let $\lts = (\ltsstates, \ltslabels_1, \ltslabels_2, \ltsedges, \goal)$ be a GLTS and $s \in \ltsstates$ be a state.
	A strategy $\sigma$ is a \emph{winning strategy} for player~$1$ at $s$ in $\lts$ iff for all $\pi \in \maxpaths_{\sigma}(s)$ there exists a position $i$ s.t.\ $\pi_i \in \goal$.
A state $s$ is called \emph{winning} if there is a winning strategy for player~$1$ at $s$.
\end{defi}

If a state is winning for player~$1$ in $\lts$ then no matter what
action sequence the environment chooses, eventually a goal state is reached.
Furthermore, for a given winning strategy $\sigma$ at $s$ in $\lts$, there is a finite number $n \in \mathbb{N}^0$ such that we always reach a goal state with at most $n$ action firings, which we prove in Lemma~\ref{lemma:depth}.
We call this minimum
number the \emph{strategy depth} of $\sigma$.
\begin{defi}[Strategy Depth]
	Let $\lts = (\ltsstates, \ltslabels_1, \ltslabels_2, \ltsedges, \goal)$ be a GLTS, $s \in \ltsstates$ a winning state for player~$1$ in $\lts$
and $\sigma$ a winning strategy at $s$ in $\lts$.
	Then $n \in \mathbb{N}^0$ is the \emph{depth} of $\sigma$ at $s$ in $\lts$ if:
	\begin{itemize}
		\item for all $\pi \in \maxpaths_{\sigma}(s)$ there exists $0 \leq i \leq n$ s.t.\ $\pi_i \in \goal$, and
		\item there exists $\pi' \in \maxpaths_{\sigma}(s)$ s.t.\ $\pi'_n \in \goal$ and for all $0 \leq j < n$ we have $\pi'_j \notin \goal$.
	\end{itemize}
\end{defi}

\begin{lem}\label{lemma:depth}
	Let $\lts = (\ltsstates, \ltslabels_1, \ltslabels_2, \ltsedges, \goal)$ be a GLTS, $s \in \ltsstates$ a winning state for player~$1$ in $\lts$, and $\sigma$ a winning strategy at $s$ in $\lts$. Then
	\begin{enumerate}
		\item\label{stratA} there exists $n \in \mathbb{N}$ that is the depth of $\sigma$ at $s$ in $\lts$, and
		\item\label{stratB} if $s \notin \goal$ then for all $a \in \stratnext{\sigma}{s}$
where $s \rtrans[a] s'$, the depth of  $\sigma$ at $s'$ in $\lts$ is
$m$ such that $0 \leq m < n$.
	\end{enumerate}
\end{lem}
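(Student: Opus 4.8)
The plan is to prove part~\eqref{stratA} first, using a König's-lemma-style argument on the tree of $\sigma$-runs, and then derive part~\eqref{stratB} as an easy consequence by relating the depth at $s$ to the depths at its successors.

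\emph{Part~\eqref{stratA}.} First I would observe that since $\sigma$ is winning at $s$, every $\pi \in \maxpaths_\sigma(s)$ hits \goal at some position; let $f(\pi)$ be the least such position. It suffices to show that $\{f(\pi) \mid \pi \in \maxpaths_\sigma(s)\}$ is bounded, since then $n = \max_\pi f(\pi)$ is well defined and witnesses both bullets in the definition of depth (the first bullet holds by maximality, the second by choosing $\pi'$ attaining the maximum; note that for $\pi'$ all strictly earlier states avoid \goal because $f(\pi')$ is the \emph{least} goal position). To bound $f$, I would build the tree whose nodes are the prefixes $s_0 \cdots s_k$ of $\sigma$-runs from $s$ that contain no \goal state before position $k$, i.e.\ truncate each run at the first goal. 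Every branch of this tree is finite (it ends as soon as a goal is reached, or at a deadlock state, which must then be in \goal since the run is winning). The tree is finitely branching: from a state $t$ the $\sigma$-run can only continue with an action in \stratnext{\sigma}{t}, and since $\ltslabels_2$ is finite and $|\sigma(t)|\le 1$, this set is finite; determinism of the GLTS then gives at most $|\stratnext{\sigma}{t}|$ children. By König's lemma an infinite finitely-branching tree has an infinite branch, contradicting that all branches are finite; hence the tree is finite, its depth is some $n \in \mathbb{N}^0$, and $f(\pi) \le n$ for all $\pi$.

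\emph{Part~\eqref{stratB}.} Assume $s \notin \goal$ and let $a \in \stratnext{\sigma}{s}$ with $s \rtrans[a] s'$. Since $\sigma$ is winning at $s$ and $s\notin\goal$, one shows $s'$ is winning at $\sigma$ as well: any $\pi' \in \maxpaths_\sigma(s')$ extends to $s\,\pi' \in \maxpaths_\sigma(s)$ (the first step uses $a \in \stratnext{\sigma}{s}$, the rest inherits the $\sigma$-constraint), which must reach \goal, and since $s\notin\goal$ the goal is reached within $\pi'$ itself. So by part~\eqref{stratA} $s'$ has a depth $m$. For $m < n$: take $\pi'\in\maxpaths_\sigma(s')$ with $\pi'_m\in\goal$ and no earlier goal; then $s\,\pi'\in\maxpaths_\sigma(s)$ has its first goal at position $m+1$, so $n \ge m+1 > m$. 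For $m \le$ ``consistency with $n$'': conversely, every $\pi\in\maxpaths_\sigma(s)$ has the form $s\,\rho$ with $\rho\in\maxpaths_\sigma(s')$ for the unique $s'$ reached by the (unique, by determinism) first action taken, so its first goal position is $1 + f(\rho) \le 1 + (\text{depth at that } s')$. Taking the maximum over the finitely many successors shows $n = 1 + \max_{s\rtrans[a]s',\,a\in\stratnext{\sigma}{s}} (\text{depth at } s')$, which in particular gives $m \le n-1 < n$ for each individual successor, i.e.\ $0 \le m < n$. (The lower bound $m\ge 0$ is trivial.)

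\emph{Main obstacle.} The only real subtlety is the finite-branching claim underpinning König's lemma: one must be careful that \stratnext{\sigma}{t} is finite for \emph{every} reachable $t$ (this is where finiteness of $\ltslabels_2$ and the fact that a strategy proposes at most one player-1 action are used) and that determinism of \ltsedges turns a finite action set into a finite set of successor states. Everything else is bookkeeping about prefixes of $\sigma$-runs and the definition of \maxpaths_\sigma.
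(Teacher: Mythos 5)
Your proof is correct and follows essentially the same route as the paper: a K\"onig's-lemma argument on the finitely-branching tree of $\sigma$-runs truncated at the first goal for part~(1), and the prepend-$s$-to-a-run argument relating the depth at $s'$ to the depth at $s$ for part~(2) (the paper phrases the latter as a contradiction, you argue it directly, and you additionally make explicit that $s'$ is winning under $\sigma$, which the paper leaves implicit).
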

\begin{proof}
	(\ref{stratA}):
	Due to $\ltslabels_1$ and $\ltslabels_2$ being finite and any $\lts$ being deterministic, we know that
every state $s \in \ltsstates$ is finitely branching.
	Since $s$ is a winning state for player~$1$ in $\lts$,
we get that every run leads to a goal state in a finite number of actions.
	Therefore, due to K\"onig's lemma, the tree induced by all runs starting from $s$, with the leafs being the first occurring goal states, is a finite tree and
hence such $n$ exists.

	(\ref{stratB}):
	Let $n$ be the depth of $\sigma$ at $s$ in $\lts$
and let $s \rtrans[a] s'$ such that $a \in \stratnext{\sigma}{s}$.
By contradiction let us assume that
the depth of $\sigma$ at $s'$ is larger than or equal
to $n$. However, this implies the existence of a run $\pi$
from $s'$ that contains $n$ or more non-goal states before reaching
the goal. The run $s \pi$ now contradicts that the depth of $s$
is $n$.
\end{proof}

A set of actions for a given state and a given set of goal states is called an
\emph{interesting set} if for any run leading to any goal state at least one action from the set of interesting actions has to be executed.
\begin{defi}[Interesting Actions]
	Let $\lts = (\ltsstates, \ltslabels_1, \ltslabels_2, \ltsedges, \goal)$ be a GLTS and $s \in \ltsstates$ a state.
	A set of actions $\interest{s}{\goal} \subseteq \ltslabels$ is called an \emph{interesting set} of actions for $s$ and $\goal$ if whenever $s \notin \goal$, $w = a_1 \cdots a_n \in \ltslabels^*$, $s \rtrans[w] s'$, and $s' \in \goal$ then there exists $i$, $1 \leq i \leq n$, such that $a_i \in \interest{s}{\goal}$.
\end{defi}

\begin{exa}\label{ex:interesting}
In Figure~\ref{fig:safe-interesting-example} we see an example of a GLTS $\lts = (\ltsstates, \ltslabels_1, \ltslabels_2, \ltsedges, \goal)$ where $\ltsstates = \{ s_1, s_2, s_3, s_4, s_5, s_6, s_7 \}$ are the states denoted by circles,
$\ltslabels_1 = \{a,b,c\}$ is the set of player~$1$ actions, $\ltslabels_2 = \{d\}$ is the set of player~$2$ actions, and $\ltsedges$ is denoted by the solid
(controllable) and dashed (uncontrollable) transitions between states,
labelled by the corresponding actions for player~$1$ and~$2$, respectively.
Let $\goal = \{ s_6 \}$. 
We now consider different proposals for a set of interesting actions for the state $s_1$.
The set $\{ b \}$ is an interesting set of actions in $s_1$ since the goal state $s_6$ cannot be reached without firing $b$ at least once.
Furthermore, the sets $\{ a \}$ and $\{ c \}$ are also sets of interesting actions for the state $s_1$.
\begin{figure}[t]
    	\centering
    	\begin{tikzpicture}[font=\scriptsize,xscale=2.2,yscale=1.3]
    	    \tikzstyle{state}=[inner sep=0pt,circle,draw=black,very thick,fill=white,minimum height=5mm, minimum width=5mm,font=\small]
        	\tikzstyle{empty}=[rectangle,draw=none,font=\small]
        	\tikzstyle{reducedstate}=[inner sep=0pt,circle,draw=black,fill=white,minimum height=5mm, minimum width=5mm,font=\small]
        	\tikzstyle{every label}=[black]
        	\tikzstyle{playerArc}=[->,>=stealth,very thick]
        	\tikzstyle{reducedplayerArc}=[->,>=stealth]
        	\tikzstyle{opponentArc}=[->,>=stealth,dashed,very thick]
        	\tikzstyle{reducedopponentArc}=[->,>=stealth,dashed]
		\begin{scope}
			\node [state] at (0,0) (s1) {$s_1$};
			\node [state] at (1.2,0) (s2) {$s_2$};
			\node [state] at (2.4,0) (s3) {$s_3$};
			\node [state] at (0,-1.2) (s4) {$s_4$};
			\node [state] at (1.2,-1.2) (s5) {$s_5$};
			\node [state,label=right:$\in \goal$] at (2.4,-1.2) (s6) {$s_6$};
			\node [state] at (1.2,-2.4) (s7) {$s_7$};

			\node [empty] at (0,-2.4) {$\safe{s_1} = \{ a \}$};
			\node [empty] at (0,-2.1) {$\interest{s_1}{\{s_6\}} = \{ a \}$};

			\draw[playerArc] (s1) -- (s2) node[midway,above]{$a$} {};
			\draw[playerArc] (s2) -- (s3) node[midway,above]{$b$} {};
			\draw[playerArc] (s1) -- (s4) node[midway,left]{$c$} {};
			\draw[playerArc] (s4) -- (s5) node[midway,above]{$a$} {};
			\draw[playerArc] (s5) -- (s6) node[midway,above]{$b$} {};
			\draw[playerArc] (s3) -- (s6) node[midway,right]{$c$} {};
			\draw[opponentArc] (s5) -- (s7) node[midway,right]{$d$} {};
		\end{scope}
    	\end{tikzpicture}
    \caption{Example of safe and interesting sets of actions for a state $\mathindex{s}{1}$}%
    \label{fig:safe-interesting-example}
\end{figure}
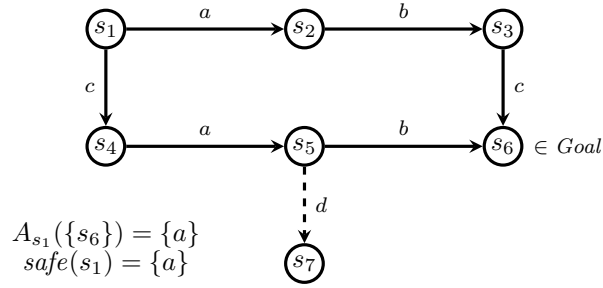
\end{exa}

Player~$1$ has to also consider her safe actions.
A player~$1$ action is \emph{safe} in a given player~$1$ state
if for any player~$1$ action sequence (excluding the safe action)
that does not enable any player~$2$ action, prefixing this sequence
with the safe action will (in case it is executable) also not enable any player~$2$ action.
\begin{defi}[Safe Action]
	Let $\lts = (\ltsstates, \ltslabels_1, \ltslabels_2, \ltsedges, \goal)$ be a GLTS and $s \in \ltsstates$ a state such that $\en_2(s) = \emptyset$.
	An action $a \in \en_1(s)$ is \emph{safe} in $s$ if whenever $w \in {(\ltslabels_1 \setminus \{a\})}^*$ with $s \rtrans[w] s'$ s.t.\ $\en_2(s') = \emptyset$ and $s \rtrans[aw] s''$ then $\en_2(s'') = \emptyset$.
	The set of all safe actions for $s$ is written as $\safe{s}$.
\end{defi}

\begin{exa}\label{ex:safe}
Consider again the GLTS in Figure~\ref{fig:safe-interesting-example}.
We reasoned in Example~\ref{ex:interesting} that the set $\{ b \}$ is an interesting set of actions in the state $s_1$.
However, $b$ is not a safe player~$1$ action in $s_1$ since by definition $b$ has to be enabled at $s_1$ to be safe.
The set of enabled actions in $s_1$ is $\en(s_1) = \{ a, c \}$, and between these two actions only $a$ is safe.
The action $c$ is not safe since we have $s_1 \rtrans[a] s_2$ and $\en_2(s_2) = \emptyset$ but $s_1 \rtrans[ca] s_5$ and $\en_2(s_5) \neq \emptyset$.
It is clear that $s_1$ is a winning state for player~$1$
and player~$1$ must initially play $a$ as playing $c$ will bring us
to the mixed state $s_5$ from which player~$1$ does not have winning strategy.
\end{exa}

\section{Stable Reduction}
In this section we introduce the notion of a \emph{stable} reduction
$\reduction$ that provides at each state $s$ the set of actions
$\reduction(s)$ that are sufficient to be explored so that the given
reachability property is preserved in the reduced game.
In the game setting, we have to guarantee the preservation of winning
strategies for both players in the game.
In what follows, we shall introduce a number of conditions (formulated
in general terms of game labelled transition systems) that
guarantee that a given reduction preserves winning strategies
and we shall call reductions satisfying these conditions \emph{stable}.

For the remainder of the section let $s \in S$ be a state and $\goal \subseteq \ltsstates$ be a set of goal states, and let $\interest{s}{\goal}$ be an arbitrary
but fixed set of interesting actions for $s$ and $\goal$.

\begin{defi}[Stable Reduction Conditions]
	A reduction $\reduction$ is called \emph{stable} if
$\reduction$ satisfies for every $s \in S$ Conditions~\ref{rule:stub-init},~\ref{rule:stub},~\ref{rule:reach},~\ref{rule:game-1},~\ref{rule:game-2},~\ref{rule:safe},~\ref{rule:visible} and~\ref{rule:dead}.
        \begin{itemize}[left=6mm]
          \item[\textbf{I}]\namedlabel{rule:stub-init}{\textbf{I}} If $\en_1(s) \neq \emptyset$ and $\en_2(s) \neq \emptyset$ then $\en(s) \subseteq \reduction(s)$.
          \item[\textbf{W}]\namedlabel{rule:stub}{\textbf{W}} For all $w \in \overline{\reduction(s)}^*$ and all $a \in \reduction(s)$ if $s \rtrans[wa] s'$ then $s \rtrans[aw] s'$.
          \item[\textbf{R}]\namedlabel{rule:reach}{\textbf{R}} $\interest{s}{\goal} \subseteq \reduction(s)$
          \item[\textbf{G1}]\namedlabel{rule:game-1}{\textbf{G1}} For all $w \in \overline{\reduction(s)}^*$ if $\en_2(s) = \emptyset$ and $s \rtrans[w] s'$ then $\en_2(s') = \emptyset$.
          \item[\textbf{G2}]\namedlabel{rule:game-2}{\textbf{G2}} For all $w \in \overline{\reduction(s)}^*$ if $\en_1(s) = \emptyset$ and $s \rtrans[w] s'$ then $\en_1(s') = \emptyset$.
          \item[\textbf{S}]\namedlabel{rule:safe}{\textbf{S}} $\en_1(s) \cap \reduction(s) \subseteq \safe{s}$ or $en_1(s) \subseteq \reduction(s)$
          \item[\textbf{V}]\namedlabel{rule:visible}{\textbf{V}} If there exists $w \in \ltslabels_2^*$ s.t.\ $s \rtrans[w] s'$ and $s' \in \goal$ then $\en_2(s) \subseteq \reduction(s)$.
          \item[\textbf{D}]\namedlabel{rule:dead}{\textbf{D}} If $\en_2(s) \neq \emptyset$ then there exists $a \in \en_2(s) \cap \reduction(s)$ s.t.\ for all $w \in \overline{\reduction(s)}^*$ where $s \rtrans[w] s'$ we have $a \in \en_2(s')$.
        \end{itemize}
\end{defi}

If $s$ is a mixed state then Condition~\ref{rule:stub-init} ensures that all enabled actions are included in the reduction. That is, we do not attempt to reduce the state space from this state.
Condition~\ref{rule:stub} states that we can swap the ordering of action
sequences such that performing stubborn actions first still ensures that
we can reach a given state (i.e.\ a stubborn action commutes with
any sequence of nonstubborn actions).
Condition~\ref{rule:reach} ensures that a goal state cannot be reached solely by exploring actions not in the stubborn set (i.e.\ we preserve the
reachability of goal states).
Conditions~\ref{rule:game-1} resp.~\ref{rule:game-2} ensure that
from any state belonging  to player~$1$ (resp.\ player~$2$), it is not possible
to reach any  player $2$ (resp.\ player $1$) state or a mixed state,  solely by exploring
only nonstubborn actions
(i.e.\ reachability of  mixed states and opposing player states are preserved in the reduction).
Condition~\ref{rule:safe} ensures that either all enabled stubborn player 1 actions
are also safe, or if this is not the case then
all enabled player $1$ actions are included in the stubborn set.
Condition~\ref{rule:visible} checks if it is possible to reach a goal state by firing exclusively player $2$ actions, and includes all enabled player $2$ actions into the stubborn set if it is the case.
Condition~\ref{rule:dead} ensures that at least one player $2$ action cannot be disabled solely by exploring nonstubborn actions.

\begin{exa}\label{ex:stable}
In Figure~\ref{fig:reduction-example} we see an example of a GLTS
using the previously introduced graphical notation.
Let $\goal = \{ s_8 \}$ be the set of goal states and
let $\interest{s_1}{\goal} = \{ a \}$ be a fixed set of interesting actions.
For state $s_1$ we assume $\reduction(s_1) = \{a, c\}$ as this stubborn
set satisfies the stable reduction conditions.
We satisfy~\ref{rule:game-1} since $c$ has to be fired before
we can reach the player $2$ state $s_9$.
For $s_1 \rtrans[ba] s_5$ and $s_1 \rtrans[bc] s_7$ we also have $s_1 \rtrans[ab] s_5$ and $s_1 \rtrans[cb] s_7$, so~\ref{rule:stub} is satisfied as well.
Clearly $\reduction(s_1)$ contains the interesting set
$\interest{s_1}{\goal}$ that we fixed to $\{a\}$, so~\ref{rule:reach} is satisfied.
Condition~\ref{rule:safe} is satisfied since $\reduction(s_1) \cap \en(s_1) \subseteq \safe{s_1}$.
We have that~\ref{rule:stub-init},~\ref{rule:game-2},~\ref{rule:visible}, and~\ref{rule:dead} are satisfied as well since their antecedents are not true.
Thick lines in the figure indicate transitions and states that are preserved
by a stable reduction $\reduction$, while thin lines indicates transitions and states that are removed by the same reduction.

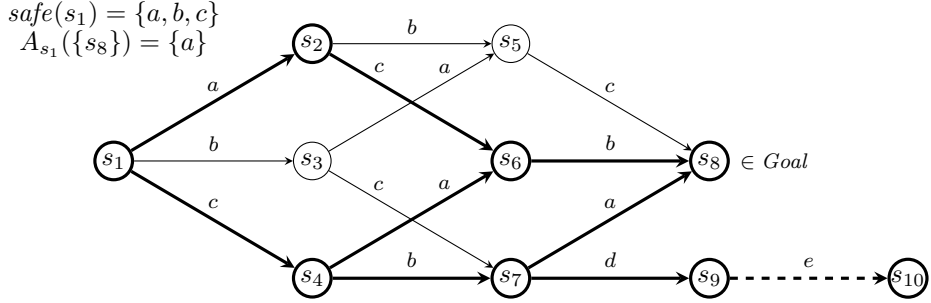
\begin{figure}[t]
    	\centering
    	\begin{tikzpicture}[font=\scriptsize,xscale=2.2,yscale=1.3]
    	    \tikzstyle{state}=[inner sep=0pt,circle,draw=black,very thick,fill=white,minimum height=5mm, minimum width=5mm,font=\small]
        	\tikzstyle{empty}=[rectangle,draw=none,font=\small]
        	\tikzstyle{reducedstate}=[inner sep=0pt,circle,draw=black,fill=white,minimum height=5mm, minimum width=5mm,font=\small]
        	\tikzstyle{every label}=[black]
        	\tikzstyle{playerArc}=[->,>=stealth,very thick]
        	\tikzstyle{reducedplayerArc}=[->,>=stealth]
        	\tikzstyle{opponentArc}=[->,>=stealth,dashed,very thick]
        	\tikzstyle{reducedopponentArc}=[->,>=stealth,dashed]
		\begin{scope}
			\node [state] at (0,0) (s1) {$s_1$};
			\node [state] at (1.2,1.2) (s2) {$s_2$};
			\node [reducedstate] at (1.2,0) (s3) {$s_3$};
			\node [state] at (1.2,-1.2) (s4) {$s_4$};
			\node [reducedstate] at (2.4,1.2) (s5) {$s_5$};
			\node [state] at (2.4,0) (s6) {$s_6$};
			\node [state] at (2.4,-1.2) (s7) {$s_7$};
			\node [state,label=right:$\in \goal$] at (3.6,0) (s8) {$s_8$};
			\node [state] at (3.6,-1.2) (s9) {$s_9$};
			\node [state] at (4.8,-1.2) (s10) {$s_{10}$};

			\node [empty] at (0,1.5) {$\safe{s_1} = \{ a, b, c \}$};
			\node [empty] at (0,1.2) {$\interest{s_1}{\{s_8\}} = \{ a \}$};

			\draw[playerArc] (s1) -- (s2) node[midway,above]{$a$} {};
			\draw[reducedplayerArc] (s1) -- (s3) node[midway,above]{$b$} {};
			\draw[playerArc] (s1) -- (s4) node[midway,above]{$c$} {};
			\draw[reducedplayerArc] (s2) -- (s5) node[midway,above]{$b$} {};
			\draw[playerArc] (s2) -- (s6) node[pos=0.3,above]{$c$} {};
			\draw[reducedplayerArc] (s3) -- (s5) node[pos=0.7,above]{$a$} {};
			\draw[reducedplayerArc] (s3) -- (s7) node[pos=0.3,above]{$c$} {};
			\draw[playerArc] (s4) -- (s6) node[pos=0.7,above]{$a$} {};
			\draw[playerArc] (s4) -- (s7) node[midway,above]{$b$} {};
			\draw[reducedplayerArc] (s5) -- (s8) node[midway,above]{$c$} {};
			\draw[playerArc] (s6) -- (s8) node[midway,above]{$b$} {};
			\draw[playerArc] (s7) -- (s8) node[midway,above]{$a$} {};
			\draw[playerArc] (s7) -- (s9) node[midway,above]{$d$} {};
			\draw[opponentArc] (s9) -- (s10) node[midway,above]{$e$} {};
		\end{scope}
    	\end{tikzpicture}
    \caption{Example of a stable reduction for a state $\mathindex{s}{1}$}%
    \label{fig:reduction-example}
\end{figure}
\end{exa}

We shall now prove the correctness of our stubborn set reduction.
We first notice the fact that if a goal state is reachable from some state,
then the state has at least one enabled action that is also in the stubborn set.

\begin{lem}\label{lemma:early-termination}
	Let $\lts = (\ltsstates, \ltslabels_1, \ltslabels_2, \ltsedges, \goal)$ be a GLTS, $\reduction$ a reduction that satisfies Conditions~\ref{rule:stub} and~\ref{rule:reach}, and $s \in \ltsstates \setminus \goal$ a state.
	If there exists $w \in \ltslabels^*$ s.t.\ $s \rtrans[w] s'$ and $s' \in \goal$ then $\reduction(s) \cap \en(s) \neq \emptyset$.
\end{lem}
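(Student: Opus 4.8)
The plan is to pull out of the witnessing run a single action that is simultaneously enabled in $s$ and stubborn, by commuting it to the front of the run using Condition~\ref{rule:stub}.

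First I would fix a witness $w = a_1 \cdots a_n \in \ltslabels^*$ with $s \rtrans[w] s'$ and $s' \in \goal$, and name the intermediate states $s = t_0 \rtrans[a_1] t_1 \rtrans[a_2] \cdots \rtrans[a_n] t_n = s'$. Since $s \notin \goal$, the defining property of the interesting set $\interest{s}{\goal}$ applied to $w$ guarantees that $a_j \in \interest{s}{\goal}$ for some $j$, and Condition~\ref{rule:reach} gives $\interest{s}{\goal} \subseteq \reduction(s)$; hence the set of indices $\{\, k \mid a_k \in \reduction(s)\,\}$ is nonempty. Let $i$ be its minimum. By minimality, $a_1, \ldots, a_{i-1} \in \overline{\reduction(s)}$, so the prefix $u = a_1 \cdots a_{i-1}$ lies in $\overline{\reduction(s)}^*$, while $a_i \in \reduction(s)$ and $s \rtrans[u\,a_i] t_i$.

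Next I would apply Condition~\ref{rule:stub} to the sequence $u \in \overline{\reduction(s)}^*$ and the action $a_i \in \reduction(s)$: from $s \rtrans[u\,a_i] t_i$ we obtain $s \rtrans[a_i\,u] t_i$. In particular $a_i$ is executable in $s$, i.e. $a_i \in \en(s)$, and since also $a_i \in \reduction(s)$ we conclude $a_i \in \reduction(s) \cap \en(s)$, so this intersection is nonempty, which is exactly the claim.

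There is no genuine obstacle here; the one point that needs care is the choice of the index $i$ — it must be the \emph{first} index whose action is stubborn (rather than merely the first interesting one), so that the whole prefix $u$ consists of non-stubborn actions and Condition~\ref{rule:stub} is applicable. The rest is a direct unwinding of the definitions of interesting set and of Conditions~\ref{rule:stub} and~\ref{rule:reach}.
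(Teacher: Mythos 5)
Your proof is correct and follows essentially the same route as the paper's: use Condition~\ref{rule:reach} (via the definition of the interesting set) to find a stubborn action in the witnessing sequence, take the \emph{first} such action so the prefix lies in $\overline{\reduction(s)}^*$, and then apply Condition~\ref{rule:stub} to commute it to the front, showing it is enabled in $s$. Your write-up is in fact slightly more explicit than the paper's about invoking the definition of interesting actions and about why the minimal stubborn (not merely interesting) index is the right choice.
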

\begin{proof}
	Assume that there exists $w = a_1 \cdots a_n \in \ltslabels^*$ s.t.\ $s \rtrans[w] s'$ and $s' \in \goal$.
	If $w \in \overline{\reduction(s)}^*$ then by Condition~\ref{rule:reach} we must have $s' \notin \goal$, however this contradicts our assumption.
	Therefore there must exist an action that occurs in $w$ that is in the stubborn set of $s$.
	Let $a_i \in \reduction(s)$ be the first of such an action s.t.\ for all $j$, $1 \leq j < i$, we have $a_j \notin \reduction(s)$.
	Clearly, we have $a_1 \cdots a_j \in \overline{\reduction(s)}^*$ and by Condition~\ref{rule:stub} we have $a_i \in \reduction(s) \cap \en(s)$.
\end{proof}

The correctness of stable stubborn reductions is proved by the next two
lemmas. Both lemmas are proved by induction on the depth of a winning strategy for player 1
in the game.

\begin{lem}\label{lemma1}
	Let $\lts = (\ltsstates, \ltslabels_1, \ltslabels_2, \ltsedges, \goal)$ be a GLTS and $\reduction$ a stable reduction.
If  a state $s \in \ltsstates$ is winning for player $1$ in $\lts$ then $s$ is also winning
for player $1$ in $\lts_{\reduction}$.
\end{lem}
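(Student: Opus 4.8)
The plan is to prove the statement by induction on $n$, the depth of a player~$1$ winning strategy at $s$ in $\lts$, with induction hypothesis that every state admitting a player~$1$ winning strategy of depth strictly below $n$ in $\lts$ is already winning for player~$1$ in $\lts_{\reduction}$. The base case $n=0$ is immediate, since then $s\in\goal$, which is trivially winning in $\lts_{\reduction}$. For the step I would assume $s\notin\goal$, fix a winning strategy $\sigma$ at $s$ of depth $n\geq 1$, note that $s$ is not a deadlock (a non-goal deadlock is never winning), and combine the reachability of some goal state from $s$ (witnessed by any maximal $\sigma$-run) with Lemma~\ref{lemma:early-termination} and Conditions~\ref{rule:stub},~\ref{rule:reach} to obtain $\reduction(s)\cap\en(s)\neq\emptyset$, so that $s$ is not a deadlock in $\lts_{\reduction}$ either. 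I would then split on whether $s$ is a mixed, a player~$2$, or a player~$1$ state in $\lts$.

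In the mixed case Condition~\ref{rule:stub-init} gives $\en(s)\subseteq\reduction(s)$, so $s$ keeps all its moves in $\lts_{\reduction}$ and $\sigma(s)$ is stubborn; in the player~$2$ case $\sigma(s)=\bot$ and every move surviving in $\lts_{\reduction}$ already lies in $\en_2(s)$. In each of these situations every successor $s'$ that must be considered is reached by some $a\in\stratnext{\sigma}{s}$, so Lemma~\ref{lemma:depth}(\ref{stratB}) shows that $\sigma$ is a winning strategy at $s'$ in $\lts$ of depth $<n$, whence by the induction hypothesis $s'$ is winning in $\lts_{\reduction}$. Having player~$1$ at $s$ play $\sigma(s)$ (or $\bot$) and then follow, at each successor, its winning strategy in $\lts_{\reduction}$ gives a winning strategy at $s$ in $\lts_{\reduction}$; since the depths of these sub-strategies strictly decrease the assembly is well founded, and to make it yield an honest memoryless strategy I would run the construction globally over the player~$1$ winning region of $\lts$, ranking each state by its least winning depth.

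The heart of the argument is the player~$1$ case, where $\en_2(s)=\emptyset$ and $\sigma(s)\in\en_1(s)$. If $\sigma(s)\in\reduction(s)$, player~$1$ just plays $\sigma(s)$ in $\lts_{\reduction}$ and we close the case as above. Otherwise I would take a $\sigma$-run $s=t_0\rtrans[b_1]t_1\rtrans[b_2]\cdots\rtrans[b_m]t_m$ from $s$ to its first goal state ($t_m\in\goal$, $t_j\notin\goal$ for $j<m$). Since $\sigma(s)=b_1\notin\reduction(s)$ and $\en_2$ stays empty along every non-stubborn prefix by Condition~\ref{rule:game-1}, the run is forced at least until the first stubborn action occurs, and by Condition~\ref{rule:reach} such an action must occur; let $b_i$ be the first one, so $i\geq 2$ and $b_1,\dots,b_{i-1}$ are player~$1$ actions. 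Commuting $b_i$ to the front by Condition~\ref{rule:stub} yields $s\rtrans[b_i]s'$ and $s'\rtrans[b_1\cdots b_{i-1}]t_i$, so in particular $b_i\in\en(s)=\en_1(s)$, hence $b_i\in\en_1(s)\cap\reduction(s)$; and since $\sigma(s)=b_1\notin\reduction(s)$ defeats the second disjunct of Condition~\ref{rule:safe}, the first one holds and $b_i\in\safe{s}$. Applying the safety of $b_i$ to every prefix of $b_1\cdots b_{i-1}$ (the empty prefix included), each of which lies in $(\ltslabels_1\setminus\{b_i\})^*$ and leaves $\en_2$ empty by Condition~\ref{rule:game-1}, I would conclude that $\en_2$ is empty all along $s'\rtrans[b_1\cdots b_{i-1}]t_i$. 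Consequently player~$1$ wins from $s'$ in $\lts$ by playing the forced actions $b_1,\dots,b_{i-1}$ (player~$2$ being disabled throughout) and then following $\sigma$ from $t_i$; iterating Lemma~\ref{lemma:depth}(\ref{stratB}) along $t_0\rtrans[b_1]t_1\cdots\rtrans[b_i]t_i$ bounds the depth of $\sigma$ at $t_i$ by $n-i$, so this strategy at $s'$ has depth at most $n-1<n$. The induction hypothesis then makes $s'$ winning in $\lts_{\reduction}$, and since $b_i\in\reduction(s)\cap\en_1(s)$ we have $s\rtrans[b_i][\reduction]s'$ with $s$ still a player~$1$ state in $\lts_{\reduction}$, so playing $b_i$ at $s$ followed by $s'$'s winning strategy finishes the case.

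The hard part will be this last case: getting Condition~\ref{rule:stub} and Condition~\ref{rule:safe} to cooperate so that pulling a stubborn action to the front of a winning run never hands player~$2$ an enabled action at an intermediate state from which it could escape the winning region, and then tracking the depth of the rearranged strategy through the various sub-cases (whether or not $\sigma(s)$ is stubborn; whether $b_i$ is the last action, so $t_i\in\goal$; coincidences among the rearranged auxiliary states when one wants a memoryless function). The mixed and player~$2$ cases should be routine, and the remaining work is only the bookkeeping that turns the per-successor strategies into one memoryless strategy on $\lts_{\reduction}$, handled by the ranking-function argument.
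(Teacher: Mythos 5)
Your proof is correct and takes essentially the same route as the paper's: induction on the strategy depth, the same three-way case split, commuting the first stubborn action of a winning run to the front via Condition~\ref{rule:stub}, and using Conditions~\ref{rule:safe} and~\ref{rule:game-1} to keep player~$2$ disabled along the shifted prefix before re-joining $\sigma$, with the depth bound $n-1$ feeding the induction hypothesis. The only cosmetic differences are that you organize the player~$1$ case by whether $\sigma(s)$ is stubborn (the paper splits on whether the first stubborn action is safe, which via Condition~\ref{rule:safe} amounts to the same two sub-arguments) and that in the player~$2$ case you obtain a surviving enabled move from Lemma~\ref{lemma:early-termination} rather than from Condition~\ref{rule:dead}, both of which are valid.
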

\begin{proof}
	Assume that $s \in \ltsstates$ is a winning state for player $1$ in $\lts$.
	By definition we have that there exists a player $1$ strategy
    $\sigma$ such that for all $\pi \in \maxpaths_{\lts,\sigma}(s)$ there exists a position $i$ s.t.\ $\pi_i \in \goal$.
	By induction on $n$ we now
prove the induction hypothesis $\mathit{IH}(n)$: ``If $s$ is a winning state for player $1$ in $\lts$ with a strategy with a depth of $n$ then $s$ is a winning state for player $1$ in $\lts_{\reduction}$.''

	\emph{Base step}.
	Let $n = 0$.
	Then since $n$ is the depth at $s$ in $\lts$ we must have $s \in \goal$ and so
  $s$ is trivially a winning state for player $1$ also in $\lts_{\reduction}$.

	\emph{Induction step}.
	Let $n > 0$ and let $\sigma$ be a winning strategy with depth $n$ for $s$.
	There are three cases: (1) $\en_1(s) \neq \emptyset$ and $\en_2(s) \neq \emptyset$, (2) $\en_2(s) = \emptyset$, and (3)  $\en_1(s) = \emptyset$.
	A deadlock at $s$, i.e.\ $\en(s) = \emptyset$, cannot be the case as we otherwise have $n = 0$.

	Case (1): Let $\en_1(s) \neq \emptyset$ and $\en_2(s) \neq \emptyset$.
	We assume that
        $s$ is a winning state for player~$1$ in $\lts$ with a strategy $\sigma$ with a depth of $n$
        and we want to show that there exists a strategy $\sigma'$ s.t.\ $s$ is a winning state for player $1$ in $\lts_{\reduction}$ with $\sigma'$.
	Since $s$ is a winning state for player $1$ in $\lts$ with $\sigma$ if $s \rtrans[a] s'$ where $a \in \stratnext{\sigma}{s}$ then $s'$ is a winning state for player $1$ in $\lts$ with $m < n$ as the depth of $\sigma$ at $s'$ in $\lts$ due to property~\ref{stratB} of Lemma~\ref{lemma:depth}.
	By the induction hypothesis $s'$ is a winning state for player $1$ in $\lts_{\reduction}$ and there exists a strategy $\sigma'$ s.t.\ $\sigma'$ is a winning strategy for player $1$ at $s'$ in $\lts_{\reduction}$.
	By Condition~\ref{rule:stub-init} we know $\en_1(s) \subseteq \reduction(s)$ implying that $\sigma(s) \in \reduction(s)$.
	Player $1$ can therefore choose the same action proposed in the original game s.t.\ $\sigma'(s) = \sigma(s)$.
	From the definition of a winning strategy we have that no matter what action player $2$ chooses, the resulting state is a winning state for player $1$, and hence $s$ is a winning state for player $1$ in $\lts_{\reduction}$.

	Case (2): Let $\en_2(s) = \emptyset$.
	Assume that $s$ is a winning state for player $1$ in $\lts$ with a strategy $\sigma$ with a depth of $n$.
	We want to show that there exists a strategy $\sigma'$ s.t.\ $s$ is a winning state for player $1$ in $\lts_{\reduction}$ with $\sigma'$.
	Let $\pi \in \maxpaths_{\lts,\sigma}(s)$ be any run and $\pi_0 = s$.
	Since $s$ is a winning state for player $1$ in $\lts$ with $\sigma$ we know there exists an $m \leq n$ s.t.\ $\pi_0 \rtrans[a_1] \pi_1 \rtrans[a_2] \cdots \rtrans[a_m] \pi_m$ and $\pi_m \in \goal$.
	Let $w = a_1 \cdots a_m$.
	We start by showing that there exists $i$, $1 \leq i \leq m$, such that $a_i \in \reduction(s)$.
	Assume that $w \in \overline{\reduction(s)}^*$ is true.
	Then we have $\pi_m \notin \goal$ due to Condition~\ref{rule:reach}, a contradiction.
	Therefore there must exist $i$, $1 \leq i \leq m$,  s.t.\ $a_i \in \reduction(s)$.
	Let $i$ be minimal in the sense that for all $j$, $1 \leq j < i$, we have $a_j \notin \reduction(s)$.
	We can then divide $w$ s.t.\ $w = va_{i}u$, $v \in \overline{\reduction(s)}^*$ and we have $s \rtrans[a_i] s'_0 \rtrans[v] \pi_i \rtrans[u] \pi_m$ due to Condition~\ref{rule:stub} as well as $s \rtrans[a_i][\reduction] s'_0$.
	There are two subcases: (2.1) $a_i \in \safe{s}$ or (2.2) $a_i \notin \safe{s}$.
	\begin{itemize}
		\item Case (2.1): Let $a_i \in \safe{s}$.
		For all $1 \leq j < i$ we have $\en_2(\pi_j) = \emptyset$ due to $i$ being minimal and Condition~\ref{rule:game-1}.
		From that, if $a_i \in \safe{s}$ then for all intermediate states in $s \rtrans[a_{i}v] \pi_i$ we only have player $1$ states otherwise $a_i$ is not a safe action due to the definition of safe actions.
		We have that $s'_0$ is a player $1$ state and let $v = a_1 a_2 \cdots a_{i-1}$ s.t.\ $s'_0 \rtrans[a_1] s'_1 \rtrans[a_2] \cdots \rtrans[a_{i-1}] \pi_i$ and for all $k$, $1 \leq k < i-1$, we have $\en_2(s'_k) = \emptyset$.
		Let $\sigma''$ be defined such that for all $j$, $0 < j < i-1$, we have $\sigma''(s'_{j-1}) = a_j$, and let $\sigma''$ from $\pi_i$ be defined as $\sigma$.
		Clearly, $\sigma''$ is a winning strategy for player $1$ at $s'_0$ in $\lts$.
		Due to property~\ref{stratB} of Lemma~\ref{lemma:depth} the depth of $\sigma''$ at $\pi_i$ in $\lts$ is at most $k \leq n-i$.
		Since $\lts$ is deterministic by following the strategy $\sigma''$ from $s'_0$ we always reach $\pi_i$ in $i-1$ actions.
		From this we can infer that the depth of $\sigma''$ at $s'_0$ in $\lts$ is at most $k+i-1$ which is clearly smaller than $n$.
		Therefore $s'_0$ is a winning state for player $1$ in $\lts$ with at most $k+i-1 < n$ as the depth of $\sigma''$ at $s'_0$ in $\lts$.
		By the induction hypothesis $s'_0$ is a winning state for player $1$ in $\lts_{\reduction}$ and there exists a strategy $\sigma'$ s.t.\ $\sigma'$ is a winning strategy for player $1$ at $s'_0$ in $\lts_{\reduction}$.
		Player $1$ can then choose $a_i$ in the reduced game such that $\sigma'(s) = a_i$ and $s$ is a winning state for player $1$ in $\lts_{\reduction}$.
		\item Case (2.2): Let $a_i \notin \safe{s}$.
		Since $a_i \notin \safe{s}$ we have $\reduction(s) \cap \en_1(s) \nsubseteq \safe{s}$ and $\en_1(s) \subseteq \reduction(s)$ by Condition~\ref{rule:safe}.
		If $s \rtrans[\sigma(s)] s'$ then $s'$ is a winning state for player $1$ in $\lts$ with $m < n$ as the depth of $\sigma$ at $s'$ in $\lts$, following property~\ref{stratB} of Lemma~\ref{lemma:depth}.
		By the induction hypothesis $s'$ is a winning state for player $1$ in $\lts_{\reduction}$ and there exists a strategy $\sigma'$ s.t.\ $\sigma'$ is a winning strategy for player $1$ at $s'$ in $\lts_{\reduction}$.
		Player $1$ can choose the same action proposed in the original game such that $\sigma'(s) = \sigma(s)$ and $s$ is a winning state for player $1$ in $\lts_{\reduction}$.
	\end{itemize}

	Case (3): Let $\en_1(s) = \emptyset$.
	Assume that $s$ is a winning state for player $1$ in $\lts$ with $\sigma$ as the winning strategy.
	We want to show that there exists a strategy $\sigma'$ s.t.\ $s$ is a winning state for player $1$ in $\lts_{\reduction}$ with $\sigma'$.
	Since $\en_1(s) = \emptyset$ we have $\sigma(s) = \sigma'(s) = \bot$ by the definition of strategies.
	We have from the definition of a winning strategy that no matter what action player~$2$ chooses,
         the resulting state is a winning state for player $1$.
	What remains to be shown is that at least one enabled player $2$ action is included in $\reduction(s)$.
	As $\en_2(s) \neq \emptyset$, due to Condition~\ref{rule:dead} we get that there exists $a \in \en_2(s) \cap \reduction(s)$, and this last case is also established.
\end{proof}

\begin{lem}\label{lemma2}
	Let $\lts = (\ltsstates, \ltslabels_1, \ltslabels_2, \ltsedges, \goal)$ be a GLTS and $\reduction$ a stable reduction.
If a state $s \in \ltsstates$ is winning for player $1$ in $\lts_{\reduction}$ then $s$ is
also  winning for player $1$ in $\lts$.
\end{lem}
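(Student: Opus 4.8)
The plan is to prove the lemma by induction on the depth $n$ of a winning strategy $\sigma$ for player~$1$ at $s$ in the \emph{reduced} game $\lts_{\reduction}$ (such a finite depth exists by Lemma~\ref{lemma:depth}, property~\ref{stratA}), establishing the hypothesis $\mathit{IH}(n)$: ``if $s$ is winning for player~$1$ in $\lts_{\reduction}$ with a winning strategy of depth at most~$n$, then $s$ is winning for player~$1$ in $\lts$''. The base step $n = 0$ is immediate since then $s \in \goal$. For the induction step I would distinguish, exactly as in the proof of Lemma~\ref{lemma1}, whether $s$ is a mixed state, a player~$1$ state ($\en_2(s) = \emptyset$), or a player~$2$ state ($\en_1(s) = \emptyset$); a deadlock would force $n = 0$.

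I would dispatch the two ``easy'' cases first. If $s$ is a mixed state, Condition~\ref{rule:stub-init} gives $\en(s) \subseteq \reduction(s)$, so $\lts$ and $\lts_{\reduction}$ have the same transitions out of $s$; player~$1$ replays $\sigma'(s) = \sigma(s)$, and every state reachable in one step from $s$ (by player~$1$'s action or by any player~$2$ action) is winning in $\lts_{\reduction}$ with strictly smaller depth by Lemma~\ref{lemma:depth}, property~\ref{stratB}, hence winning in $\lts$ by the induction hypothesis. If $s$ is a player~$1$ state, then $\sigma(s) \in \en_1(s) \cap \reduction(s)$; player~$1$ replays it in $\lts$, its unique successor coincides in both games and has smaller depth in $\lts_{\reduction}$, so by the induction hypothesis it is winning in $\lts$, and since $s$ has no player~$2$ moves in $\lts$, $s$ is winning in $\lts$.

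The substantial case, and where I expect the main obstacle, is the player~$2$ state $s$. Here player~$1$ is passive, so $s$ is winning in $\lts$ iff \emph{every} player~$2$ successor of $s$ in $\lts$ is winning in $\lts$, and player~$2$ may now use actions outside $\reduction(s)$ that are absent from $\lts_{\reduction}$. For $b \in \en_2(s) \cap \reduction(s)$ the successor is winning in $\lts_{\reduction}$ with smaller depth (Lemma~\ref{lemma:depth}, property~\ref{stratB}) and hence winning in $\lts$ by the induction hypothesis. For a pruned action $b \in \en_2(s) \setminus \reduction(s)$, I would study the set of states reachable from $s$ along words in $\overline{\reduction(s)}^*$: by Condition~\ref{rule:game-2} each of them is again a player~$2$ state, and by Condition~\ref{rule:reach} none of them is in $\goal$. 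When a play finally leaves this set by performing a stubborn action, that action is a player~$2$ action and, by Condition~\ref{rule:stub}, commutes to the front of the word; the resulting state is then the successor of $s$ under a stubborn player~$2$ action, so it is winning in $\lts_{\reduction}$ with smaller depth and hence in $\lts$ by the induction hypothesis, and the remaining suffix of player~$2$ actions keeps us in winning states (a player~$2$ successor of a winning non-goal state is winning). Condition~\ref{rule:visible} takes care of the boundary situation in which a goal is reachable from $s$ purely by player~$2$ actions: then $\en_2(s) \subseteq \reduction(s)$, the pruned set collapses, and every player~$2$ successor of $s$ has already been handled.

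The one point requiring genuine care — the main obstacle — is to rule out plays in which player~$2$ stays inside the pruned set \emph{forever}; since by Condition~\ref{rule:reach} that set contains no goal, such an infinite play would show $s$ is \emph{not} winning in $\lts$, so it must be shown impossible using that the strategy depth in $\lts_{\reduction}$ is finite. The plan is: by Condition~\ref{rule:dead} there is a player~$2$ action $a \in \en_2(s) \cap \reduction(s)$ that stays enabled throughout the pruned set and, by Condition~\ref{rule:stub}, commutes with every word of $\overline{\reduction(s)}^*$; an infinite pruned play $s = s_0 \rtrans[b_1] s_1 \rtrans[b_2] \cdots$ then transfers, by moving $a$ to the front, to an infinite play out of the $a$-successor of $s$, which is winning in $\lts_{\reduction}$ with smaller depth and hence winning in $\lts$ by the induction hypothesis; since it is consistent with a winning strategy there, this transferred play must reach $\goal$, and pushing this back through Conditions~\ref{rule:visible} and~\ref{rule:reach} at the corresponding pruned states yields the contradiction. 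This is exactly the step for which the visibility Condition~\ref{rule:visible} on player~$2$ actions was added — it is not used in Lemma~\ref{lemma1} — and making its interplay with the commutation argument precise is the crux of the proof.
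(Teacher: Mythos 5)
Your proposal is correct and follows essentially the same route as the paper's proof: induction on the depth of the winning strategy in $\lts_{\reduction}$, Condition~\ref{rule:stub-init} for mixed states, replaying $\sigma(s)$ for player-$1$ states, and for player-$2$ states the same interplay of Conditions~\ref{rule:game-2},~\ref{rule:reach},~\ref{rule:stub},~\ref{rule:dead} and~\ref{rule:visible}, including the transfer of an infinite nonstubborn play to the front-commuted stubborn successor, which is exactly the paper's treatment of its third subcase. The only step you do not spell out is a nonstubborn play that terminates in a deadlock inside the pruned set (the paper's finite-deadlock subcase), but the persistently enabled action from Condition~\ref{rule:dead} that you already invoke rules this out immediately.
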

\begin{proof}
	Assume that $s \in \ltsstates$ is a winning state for player $1$ in $\lts_{\reduction}$.
	By definition we have that there exists a strategy
 $\sigma$ s.t.\ for all $\pi \in \maxpaths_{\lts_{\reduction},\sigma}(s)$ there exists a position $i$ s.t.\ $\pi_i \in \goal$.
	Let $\sigma$ be fixed for the remainder of the proof.
	Let $n$ be the depth of $\sigma$ at $s$ in $\lts_{\reduction}$.
	By induction on $n$ we prove the induction hypothesis $\mathit{IH}(n)$:
``If $s$ is a winning state for player $1$ in $\lts_{\reduction}$ with a strategy with a depth of $n$ then $s$ is a winning state for player $1$ in $\lts$.''

	\emph{Base step}.
	If $n = 0$ then since $n$ is the depth at $s$ in $\lts_{\reduction}$ we must
  have $s \in \goal$, implying that $s$ is a winning state for player $1$ also in $\lts$.

	\emph{Induction step}.
	Let $n > 0$ and let $s$ be a winning state for player $1$ in
        $\lts_{\reduction}$ with a strategy with a depth of $n$.
	There are three cases: (1) $\en_1(s) \cap \reduction(s) \neq \emptyset$ and $\en_2(s) \cap \reduction(s) \neq \emptyset$, (2) $\en_2(s) \cap \reduction(s) = \emptyset$, and (3)  $\en_1(s) \cap \reduction(s) = \emptyset$.
	A deadlock at $s$ in $\lts_{\reduction}$ such that $\en(s) \cap \reduction(s) = \emptyset$
        is not possible as otherwise we have the case where $n = 0$.

	Case (1): Let $\en_1(s) \cap \reduction(s) \neq \emptyset$ and $\en_2(s) \cap \reduction(s) \neq \emptyset$.
	We assume that $s$ is a winning state for player $1$ in $\lts_{\reduction}$ with a strategy $\sigma$ with a depth of $n$.
	We want to show that there exists a strategy $\sigma'$ s.t.\ $s$ is a winning state for player $1$ in $\lts$ with $\sigma'$.
	Since $s$ is a winning state for player $1$ in $\lts_{\reduction}$ with $\sigma$,
whenever $s \rtrans[\sigma(s)][\reduction] s'$ or $s \rtrans[a][\reduction] s'$ where $a \in \en_2(s) \cap \reduction(s)$ then $s'$ is a winning state for player $1$ in $\lts_{\reduction}$ with $m < n$ as the depth of $\sigma$ at $s'$ in $\lts_{\reduction}$, following property~\ref{stratB} of Lemma~\ref{lemma:depth}.
	By the induction hypothesis $s'$ is a winning state for player $1$ in $\lts$ and there exists a strategy $\sigma'$ s.t.\ $\sigma'$ is a winning strategy for player $1$ at $s'$ in $\lts$.
	Since $s$ is a mixed state in $\lts_{\reduction}$ then $s$ must also be a mixed state in $\lts$ due to $\rtrans[][\reduction] \subseteq \ltsedges$. This implies that $s \rtrans[\sigma(s)] s'$.
	Therefore player $1$ can choose the same action proposed in the reduced game such that $\sigma'(s) = \sigma(s)$.
	Furthermore we have $\en_2(s) \cap \reduction(s) = \en_2(s)$ from Condition~\ref{rule:stub-init}.
	From this we can conclude that $s$ is a winning state for player $1$ in $\lts$ with strategy $\sigma'$.

	Case (2): Let $\en_2(s) \cap \reduction(s) = \emptyset$.
	Assume that $s$ is a winning state for player $1$ in $\lts_{\reduction}$ with a strategy $\sigma$ with a depth of $n$.
	We want to show that there exists a strategy $\sigma'$ s.t.\ $s$ is a winning state for player $1$ in $\lts$ with $\sigma'$.
	Since $s$ is a winning state for player $1$ in $\lts_{\reduction}$ with $\sigma$ we have $s \rtrans[\sigma(s)][\reduction] s'$ and $s'$ is a winning state for player $1$ in $\lts_{\reduction}$ with $m < n$ as the depth of $\sigma$ at $s'$ in $\lts_{\reduction}$, following property~\ref{stratB} of Lemma~\ref{lemma:depth}.
	By the induction hypothesis $s'$ is a winning state for player $1$ in $\lts$ and there exists a strategy $\sigma'$ s.t.\ $\sigma'$ is a winning strategy for player $1$ at $s'$ in $\lts$.
	Trivially we have that $s \rtrans[\sigma(s)] s'$ since we have $\rtrans[][\reduction] \subseteq \ltsedges$.
	Therefore player $1$ can choose the same action proposed in the reduced game $\sigma'(s) = \sigma(s)$.
	Next we show by contradiction that $s$ is a player~$1$ state also in $\lts$.
	Assume $\en_2(s) \neq \emptyset$, i.e.\ that $s$ is a mixed state in $\lts$.
	From this we can infer by Condition~\ref{rule:stub-init} that $\en(s) \subseteq \reduction(s)$ and $\en_2(s) \cap \reduction(s) \neq \emptyset$, which is a contradiction.
	Therefore we have $\en_2(s) = \emptyset$, i.e.\ $s$ is a player $1$ state also in $\lts$, and $s$ is a winning state for player $1$ in $\lts$ with strategy $\sigma'$.

	Case (3): Let $\en_1(s) \cap \reduction(s) = \emptyset$.
	Assume that $s$ is a winning state for player $1$ in $\lts_{\reduction}$ with a strategy $\sigma$ with a depth of $n$.
	We want to show that there exists a strategy $\sigma'$ s.t.\ $s$ is a winning state for player $1$ in $\lts$ with $\sigma'$.
	Since $\en_1(s) \cap \reduction(s) = \emptyset$ then we have $\sigma(s) = \bot$.
	Furthermore, we have $\en_1(s) = \emptyset$ since otherwise with Condition~\ref{rule:stub-init} we will be able to infer that $\en_1(s) \cap \reduction(s) \neq \emptyset$, which is a contradiction.
	We define $\sigma'=\sigma$.

What remains to be shown is that $s$ is a winning state for player $1$ in $\lts$.
For the sake of contradiction assume that this is not the case, i.e.
that there exists $\pi \in \maxpaths_{\lts,\sigma'}(s)$  such that
\[s=\pi_0 \rtrans[a_1] \pi_1 \rtrans[a_2] \pi_2 \rtrans[a_3] \cdots\]
and $\pi_i \notin \goal$ for all positions $i$.
We shal first argue that $a_1 \notin \reduction(s)$.
If this is not the case, then $\pi_0 \rtrans[a_1][\reduction] \pi_1$ also in the reduced game $\lts_{\reduction}$.
Due to our assumption that $s = \pi_0$ is a winning state for player $1$ in $\lts_{\reduction}$ and $a_1 \in \ltslabels_2$, we know that also $\pi_1$ is a winning state for player $1$ in $\lts_{\reduction}$ with $m < n$ as the depth of $\sigma$ at $\pi_1$ in $\lts_{\reduction}$, following property~\ref{stratB} of Lemma~\ref{lemma:depth}.
By the induction hypothesis $\pi_1$ is a winning state for player $1$ in $\lts$,
which contradicts the existence of the maximal path $\pi$ with no goal states.

Let us so assume that $a_1 \notin \reduction(s)$.
Let $j > 1$ be the smallest index such that $a_j \in \reduction(s)$
and $a_1a_2\cdots a_{j-1}a_j \in \ltslabels_2^*$. Such index must
exist because of the following case analysis.
\begin{itemize}
\item Either the sequence $a_1a_2\cdots$ contains an action that belongs to
$\ltslabels_1$ (we note that because of our assumption $a_1 \notin \ltslabels_1$).
Due to Condition~\ref{rule:game-2} there must exist
an action $a_j$ that is stubborn in $s$ and let $j > 1$ be the smallest index
such that $a_j \in \reduction(s)$. As $a_j$ is the first action that is stubborn,
we get that $a_1a_2\cdots a_{j-1} a_j \in \ltslabels_2^*$ as otherwise
the existence of $i \leq j$ where $a_i \in \ltslabels_1$ contradicts
the minimality of $j$ due to Condition~\ref{rule:game-2}.
\item
Otherwise the sequence $a_1a_2\cdots$ consists solely of actions from $\ltslabels_2$.
If the sequence contains a stubborn action then we are done and similarly, if the sequence
is finite and ends in a deadlock, we get by Condition~\ref{rule:dead} that there must be
an $j > 1$ where  $a_j \in \reduction(s)$ as required.
The last option is that the sequence $a_1a_2\cdots$ is infinite and does not contain any stubborn action.
By Condition~\ref{rule:dead} there exists $a \in \en_2(s) \cap \reduction(s)$ such that
for all $i > 0$ we have $s \rtrans[a_1 \cdots a_i] \pi_i \rtrans[a] \pi_i'$ and then
by Condition~\ref{rule:stub} we get $s \rtrans[a] \pi_0' \rtrans[a_1 \cdots a_i] \pi_i'$.
This implies that 
from $\pi_0'$ we can also execute the
infinite sequence of actions $a_1a_2\cdots$ while Condition~\ref{rule:visible} guarantees
that none of the states visited during this execution is a goal state.
Hence the state $\pi_0'$ must be losing for player 1 in $\lts$, which however
contradicts that by induction hypothesis $\pi_0'$ is winning for player 1 in $\lts$
as $s \rtrans[a] \pi_0'$ with $a \in \reduction(s)$ and the depth of
player 1 winning strategy
at $\pi_0'$ in $\lts_{\reduction}$ is smaller than the depth at
$s$ in $\lts_{\reduction}$.
Hence there cannot be any infinite sequence of nonstubborn actions starting from $s$.
\end{itemize}
As we have now established that there is the smallest index $j > 1$ such that
$a_j \in \reduction(s)$ and $a_1a_2\cdots a_{j-1}a_j \in \ltslabels_2^*$,
the minimality of $j$ implies that
$a_1a_2\cdots a_{j-1} \in \overline{\reduction(s)}^*$.
This means that we can apply Condition~\ref{rule:stub} and conclude
that there exists a maximal run $\pi'$ given by
\[s \rtrans[a_j] s' \rtrans[a_1 a_2 \cdots a_{j-1}] \pi_j \rtrans[a_{j+1}] \pi_{j+1}
\rtrans[a_{j+2}]  \cdots\]
that is from $\pi_j$ identical to the run of $\pi$. Hence
$\pi_i \notin \goal$ for all $i \geq j$. We notice that
also the intermediate states in the prefix of the run $\pi'$ may not be goal states,
which is implied by Condition~\ref{rule:visible} and the fact that
$a_1 \in \en_2(s)$, $a_1a_2\cdots a_{j-1}a_j \in \ltslabels_2^*$, and $a_1 \notin \reduction(s)$.
However, as $a_j \in \reduction(s)$ we get
$s \rtrans[a_j][\reduction] s'$ and because $a_j \in \ltslabels_2$ we know
that $s'$ is a winning state for player $1$ in $\lts_{\reduction}$ with $m < n$ as the depth of $\sigma$ at $s'$ in $\lts_{\reduction}$, following property~\ref{stratB} of Lemma~\ref{lemma:depth}.
By the induction hypothesis $s'$ is a winning state for player $1$ in $\lts$,
which contradicts the existence of a maximal run from $s'$ that contains no goal states.
Hence the proof of Case (3) is finished.
\end{proof}

We can now present the main theorem showing that stable reductions preserve
the winning strategies of both players in the game.

\begin{thm}[Strategy Preservation for GLTS]\label{theorem:preservation-1}
	Let $\lts = (\ltsstates, \ltslabels_1, \ltslabels_2, \ltsedges, \goal)$ be a GLTS and $\reduction$ a stable reduction.
A state $s \in \ltsstates$ is winning for player $1$ in $\lts$ iff
$s$ is winning for player $1$ in $\lts_{\reduction}$.
\end{thm}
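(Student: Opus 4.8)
The plan is to observe that the theorem is an immediate corollary of the two lemmas just established. The biconditional splits into two implications, and each implication is exactly one of the preceding lemmas: the ``only if'' direction (winning in \lts implies winning in $\lts_{\reduction}$) is precisely Lemma~\ref{lemma1}, and the ``if'' direction (winning in $\lts_{\reduction}$ implies winning in \lts) is precisely Lemma~\ref{lemma2}. So the proof is a two-line argument that invokes stability of \reduction to apply both lemmas to the fixed state $s$.

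Concretely, first I would assume $s$ is winning for player~$1$ in \lts. Since \reduction is a stable reduction, Lemma~\ref{lemma1} applies directly and yields that $s$ is winning for player~$1$ in $\lts_{\reduction}$. Conversely, assuming $s$ is winning for player~$1$ in $\lts_{\reduction}$, stability of \reduction lets me invoke Lemma~\ref{lemma2}, which gives that $s$ is winning for player~$1$ in \lts. Combining the two implications establishes the equivalence, which completes the proof.

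There is essentially no obstacle left at this level: all the real work — the two inductions on strategy depth, the case analysis on player~$1$/player~$2$/mixed states, the commutation arguments via Condition~\ref{rule:stub}, and in particular the delicate handling in Case~(3) of Lemma~\ref{lemma2} of a possibly infinite player~$2$ play consisting solely of nonstubborn actions (where Conditions~\ref{rule:dead} and~\ref{rule:visible} are used together with Lemma~\ref{lemma:depth}) — has already been carried out in the lemmas. If anything merits a remark, it is simply that Lemma~\ref{lemma:depth}\eqref{stratA} guarantees every winning strategy has a well-defined finite depth, so the inductive hypotheses of both lemmas are applicable to $s$ regardless of which game it is winning in; beyond that, the theorem is just the conjunction of Lemmas~\ref{lemma1} and~\ref{lemma2}.
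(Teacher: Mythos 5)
Your proposal matches the paper's proof exactly: the theorem is stated as an immediate consequence of Lemmas~\ref{lemma1} and~\ref{lemma2}, with each direction of the biconditional given by one lemma. Correct, and no further detail is needed.
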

\begin{proof}
	Follows from Lemma~\ref{lemma1} and~\ref{lemma2}.
\end{proof}

\begin{rem}\label{rem:error}
In~\cite{boenneland2019partial} we omitted Condition~\ref{rule:visible}
from the definition of stable reduction and this implied that Lemma~\ref{lemma2}
(as it was stated in~\cite{boenneland2019partial}) did not hold.
We illustrate this in Figure~\ref{fig:counter} where all actions are player $2$ actions
and the goal state is $s_2$. Clearly, player $1$ does not have a winning strategy
as player $2$ can play the action $b$ followed by $a$ and reach the deadlock state $s_4$
without visiting the goal state.
The stubborn set $\reduction(s_1) = \{ a \}$ on the other hand satisfies all
conditions of the stable reduction, except for~\ref{rule:visible}, however,
it breaks Lemma~\ref{lemma2} because in the reduced system the action $b$
in $s_1$ is now exluded and the (only) stubborn action $a$ for the environment
brings us to a goal state.
It is therefore the case that in the original game $s_1$ is not a winning state for player $1$
but in the reduced game it is.
The extra Condition~\ref{rule:visible} introduced in this article forces us to include
all enabled actions in $s_1$ into the stubborn set, and hence the validity of
Lemma~\ref{lemma2} is recovered.

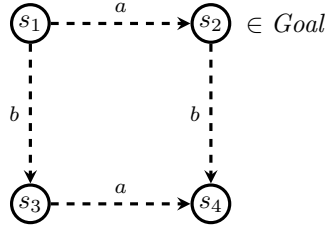
\begin{figure}[t]
    	\centering
    	\begin{tikzpicture}[font=\scriptsize,xscale=2,yscale=2]
    	    \tikzstyle{state}=[inner sep=0pt,circle,draw=black,very thick,fill=white,minimum height=5mm, minimum width=5mm,font=\small]
        	\tikzstyle{empty}=[rectangle,draw=none,font=\small]
        	\tikzstyle{reducedstate}=[inner sep=0pt,circle,draw=black,fill=white,minimum height=5mm, minimum width=5mm,font=\small]
        	\tikzstyle{every label}=[black]
        	\tikzstyle{playerArc}=[->,>=stealth,very thick]
        	\tikzstyle{reducedplayerArc}=[->,>=stealth]
        	\tikzstyle{opponentArc}=[->,>=stealth,dashed,very thick]
        	\tikzstyle{reducedopponentArc}=[->,>=stealth,dashed]

		\begin{scope}
			\node [state] at (0,0) (s1) {$s_1$};
			\node [state] at (1.2,0) (s2) {$s_2$};
			\node [state] at (0,-1.2) (s3) {$s_3$};
			\node [state] at (1.2,-1.2) (s4) {$s_4$};
			\node [empty] at (1.7,0) {$\in \goal$};

			\draw[opponentArc] (s1) -- (s2) node[midway,above]{$a$} {};
			\draw[opponentArc] (s1) -- (s3) node[midway,left]{$b$} {};
			\draw[opponentArc] (s2) -- (s4) node[midway,left]{$b$} {};
			\draw[opponentArc] (s3) -- (s4) node[midway,above]{$a$} {};
		\end{scope}
    	\end{tikzpicture}
    \caption{Example showing the importance of Condition~\ref{rule:visible}}%
    \label{fig:counter}
\end{figure}
\end{rem}

Finally, we notice that for non-mixed games we can simplify the conditions of stable
reductions by removing the requirement on safe actions.

\begin{thm}[Strategy Preservation for Non-Mixed GLTS]\label{theorem:preservation-2}
	Let $\lts = (\ltsstates, \ltslabels_1, \ltslabels_2, \ltsedges, \goal)$ be a non-mixed GLTS and $\reduction$ a stable reduction with Condition~\ref{rule:safe} excluded.
A state $s \in \ltsstates$ is winning for player $1$ in $\lts$ iff
$s$ is winning for player $1$ in $\lts_{\reduction}$.
\end{thm}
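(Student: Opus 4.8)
The plan is to reuse the proofs of Lemmas~\ref{lemma1} and~\ref{lemma2} essentially verbatim, after observing that Condition~\ref{rule:safe} is invoked in exactly one spot: the split into subcases (2.1)/(2.2) in Case~(2) of Lemma~\ref{lemma1} (the implication from $\lts$ to $\lts_{\reduction}$). The proof of Lemma~\ref{lemma2} (and of Lemma~\ref{lemma:early-termination}, and Cases~(1) and~(3) of Lemma~\ref{lemma1}) never mentions Condition~\ref{rule:safe}, so those carry over unchanged to the non-mixed setting even with the condition dropped. Hence it suffices to show that, when $\lts$ is non-mixed, Case~(2) of Lemma~\ref{lemma1} goes through without appealing to safety; the theorem then follows by combining the two inclusions exactly as in the proof of Theorem~\ref{theorem:preservation-1}.

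So I would recall the setup of Case~(2): $s$ is a player~$1$ state ($\en_2(s) = \emptyset$) that is winning with strategy depth $n>0$, there is a $\sigma$-run $\pi_0 \rtrans[a_1] \pi_1 \rtrans[a_2] \cdots \rtrans[a_m] \pi_m$ with $\pi_m \in \goal$ and $m \le n$, and $a_i$ is the first action of $a_1\cdots a_m$ lying in $\reduction(s)$, so $a_1\cdots a_{i-1} \in \overline{\reduction(s)}^*$. Applying Condition~\ref{rule:game-1} to each prefix of this nonstubborn sequence shows $\en_2(\pi_0) = \cdots = \en_2(\pi_{i-1}) = \emptyset$, so $\pi_0,\dots,\pi_{i-1}$ are player~$1$ states and therefore $a_1,\dots,a_i \in \ltslabels_1$. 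By Condition~\ref{rule:stub} we may commute $a_i$ to the front, yielding $s \rtrans[a_i] s_0' \rtrans[a_1] s_1' \rtrans[a_2] \cdots \rtrans[a_{i-1}] \pi_i$ together with $s \rtrans[a_i][\reduction] s_0'$.

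The key point — and the place where the non-mixed hypothesis takes over the role of Condition~\ref{rule:safe} — is that the intermediate states $s_0',\dots,s_{i-2}'$ produced by this commutation are again player~$1$ states: each $s_k'$ has the outgoing edge $s_k' \rtrans[a_{k+1}] s_{k+1}'$ with $a_{k+1}\in\ltslabels_1$, so $\en_1(s_k')\neq\emptyset$, and since $s_k'$ is neither a deadlock nor (by non-mixedness) a mixed state, it must have $\en_2(s_k')=\emptyset$. This is exactly the consequence that $a_i\in\safe{s}$ gave us in the general proof. With it, the strategy $\sigma''$ that plays $a_1,\dots,a_{i-1}$ deterministically from $s_0'$ and then behaves as $\sigma$ from $\pi_i$ is well defined and winning at $s_0'$ in $\lts$, with depth at most $(i-1)+(n-i) < n$ (using property~\ref{stratB} of Lemma~\ref{lemma:depth} to bound the depth of $\sigma$ at $\pi_i$). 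The induction hypothesis then provides a winning strategy $\sigma'$ for player~$1$ at $s_0'$ in $\lts_{\reduction}$; extending it by $\sigma'(s) = a_i$ — legal since $a_i \in \en_1(s) \cap \reduction(s)$ and $\en_2(s) = \emptyset$ — witnesses that $s$ is winning in $\lts_{\reduction}$. The degenerate case $i = 1$ needs no $\sigma''$: then $a_1 = \sigma(s) \in \reduction(s)$ and one simply applies the induction hypothesis to $\pi_1$.

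I expect the only real obstacle to be the bookkeeping around this commutation step, namely verifying that moving the stubborn action to the front cannot create an intermediate state in which the environment suddenly acquires a move, and that this is precisely where ``non-mixed'' does the work that Condition~\ref{rule:safe} did in general. Everything else (the base cases, Cases~(1) and~(3) of Lemma~\ref{lemma1}, and all of Lemma~\ref{lemma2}) is literally the earlier argument, so assembling the final equivalence is immediate.
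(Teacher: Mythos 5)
Your proof is correct and follows essentially the same route as the paper: the paper's own argument is precisely that Condition~\ref{rule:safe} is only used in subcase (2.2) of Lemma~\ref{lemma1}, Lemma~\ref{lemma2} never uses it, and in a non-mixed GLTS the subcase (2.1) argument survives without safety because the intermediate states after commuting $a_i$ to the front have an enabled player~$1$ action and hence cannot be mixed or player~$2$ states. You simply spell out in detail (including the $i=1$ degenerate case and the depth bookkeeping) what the paper states in three sentences.
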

\begin{proof}
In Lemma~\ref{lemma2} the condition~\ref{rule:safe} is not used at all.
In Lemma~\ref{lemma1} the subcase (2.2) is the only one that relies on~\ref{rule:safe}.
Because there are no mixed states, the arguments in subcase (2.1) are valid
irrelevant of whether $a_i$ is safe or not.
\end{proof}

\section{Stable Reductions on Petri Net Games}
We now introduce the formalism of Petri net games and show
how to algorithmically construct stable reductions in a syntax-driven
manner.

\begin{defi}[Petri Net Game]
    A \PNG is a tuple $N = \petrituple$ where
\begin{itemize}
\item $\places$ and
$\transitions = \transitions_1 \uplus \transitions_2$
are finite sets of places and transitions, respectively, such that
$\places \cap \transitions = \emptyset$ and where transitions
are partitioned into player $1$ and player $2$ transitions,
\item
$\weights: (\places \times \transitions) \cup (\transitions \times \places) \rightarrow \mathbb{N}^0$ is a weight function for regular arcs, and
\item $\inhib: (\places \times \transitions) \rightarrow \mathbb{N}^{\infty}$ is a weight function for inhibitor arcs.
\end{itemize}
    A \emph{marking} $M$ is a function $M: P \to \mathbb{N}^0$ and
    $\markings$ denotes the set of all markings for $N$.
\end{defi}
For the rest of this section,
let $N = \petrituple$ be a fixed \PNG such that
$\transitions = \transitions_1 \uplus \transitions_2$.
Let us first fix some useful notation.
For a place or transition $x$, we denote the \emph{preset} of $x$ as $\preset{x} = \{ y \in \places \cup \transitions \mid \weights(y,x) > 0 \}$, and the \emph{postset} of $x$ as $\postset{x} = \{ y \in \places \cup \transitions \mid \weights(x,y) > 0 \}$.
For a transition $t$, we denote the \emph{inhibitor preset} of $t$ as $\inhibpreset{t} = \{ p \in \places \mid \inhib(p,t) \neq \infty \}$, and the \emph{inhibitor postset} of a place $p$ as $\inhibpostset{p} = \{ t \in \transitions \mid \inhib(p,t) \neq \infty \}$.
For a place $p$ we define the \emph{increasing preset} of $p$, containing all transitions that increase the number of tokens in $p$, as $\preincr{p} = \{ t \in \preset{p} \mid \weights(t,p) > \weights(p,t) \}$, and similarly the \emph{decreasing postset} of $p$ as $\postdecr{p} = \{ t \in \postset{p} \mid \weights(t,p) < \weights(p,t) \}$.
For a transition $t$ we define the \emph{decreasing preset} of $t$, containing all places that have their number of tokens decreased by $t$, as $\predecr{t} = \{ p \in \preset{t} \mid \weights(p,t) > \weights(t,p) \}$, and similarly the \emph{increasing postset} of $t$ as $\postincr{t} = \{ p \in \postset{t} \mid \weights(p,t) < \weights(t,p) \}$.
For a set $X$ of either places or transitions, we extend the notation as $\preset{X} = \bigcup_{x \in X} \preset{x}$ and $\postset{X} = \bigcup_{x \in X} \postset{x}$, and similarly for the other operators.

A Petri net $N = \petrituple$ defines a GLTS $\lts(N) = (\ltsstates, \ltslabels_1, \ltslabels_2, \ltsedges, \goal)$ where
\begin{itemize}
\item $\ltsstates = \markings$ is the set of all markings,
\item
$\ltslabels_1 = \transitions_1$ is the set of player 1 actions,
\item
$\ltslabels_2 = \transitions_2$ is the set of player 2 actions,
\item $M \rtrans[t] M'$ whenever for all $p \in P$ we have $M(p) \geq \weights(p,t)$, $M(p) < \inhib(p,t)$ and $M'(p) = M(p) - \weights(p,t) + \weights(t,p)$, and
\item $\goal \in \markings$ is the set of goal markings, described by a simple reachability
logic formula defined below.
\end{itemize}

\noindent
Let $E_N$ be the set of marking expressions in $N$
given by the abstract syntax (here $e$ ranges over $E_N$):
    \[e ::= c \mid p \mid e_1 \oplus e_2\]
where $c \in \mathbb{N}^0$, $p \in \places$, and $\oplus \in \{ +, -, * \}$.
An expression $e \in E_N$ is evaluated relative to a marking $M \in \markings$ by the function $\eval_M : E_N \to \mathbb{Z}$
where $\eval[M][c] = c$, $\eval[M][p] = M(p)$ and
$\eval[M][e_{1} \oplus e_{2}] = \eval[M][e_{1}] \oplus \eval[M][e_{2}]$.

In Table~\ref{tab:incr-and-decr} we define the functions $\mathit{incr_M}: E_N \to 2^T$ and $\mathit{decr_M}: E_N \to 2^T$ that,
given an expression $e \in E_N$, return the set of transitions
that can (when fired) increase resp.\ decrease the evaluation of $e$.
We note that transitions in $\mathit{incr_M}(e)$ and $\mathit{decr_M}(e)$
are not necessarily enabled in $M$, however, due to Lemma~\ref{lemma:incr},
if a transition firing increases the evaluation of $e$ then the transition
must be in $\mathit{incr_M}(e)$,
and similarly for $\mathit{decr_M}(e)$.

\begin{lemC}[\cite{boenneland2018start}]\label{lemma:incr}
	Let $N = \petrituple$ be a Petri net and $M \in \markings$ a marking.
	Let $e \in E_N$ and let $M \rtrans[w] M'$ where $w=t_1t_2\dots t_n \in T^*$.
	\begin{itemize}
		\item
		If $\eval[M][e] < \eval[M'][e]$ then there is $i$, $1 \leq i \leq n$, such that $t_i \in \incr[M][e]$.
		\item
		If $\eval[M][e] > \eval[M'][e]$ then there is $i$, $1 \leq i \leq n$, such that $t_i \in \decr[M][e]$.
	\end{itemize}
\end{lemC}

\begin{table}[t]
\centering
\scalebox{0.94}{%
    \def\arraystretch{1.2}
    \begin{tabular}{lll}
    Expression $e$ & $\incr[M][e]$  & $\decr[M][e]$ \\ \toprule
    $c$ & $\emptyset$ & $\emptyset$ \\
    $p$ & $\preincr{p}$ & $\postdecr{p}$ \\
    $e_1 + e_2$ & $\incr[M][e_1] \cup \incr[M][e_2]$ & $\decr[M][e_1] \cup \decr[M][e_2]$ \\
    $e_1 - e_2$ & $\incr[M][e_1] \cup \decr[M][e_2]$ & $\decr[M][e_1] \cup \incr[M][e_2]$ \\
    $e_1 \cdot e_2$ & $\incr[M][e_1] \cup \decr[M][e_1] \cup {}$
                & $\incr[M][e_1] \cup \decr[M][e_1] \cup {}$ \\
    & \;$ \incr[M][e_2] \cup \decr[M][e_2]$ & \;$ \incr[M][e_2] \cup \decr[M][e_2]$
    \end{tabular}
}
    \caption{Increasing and decreasing transitions for expression $e \in \mathindex{E}{N}$}%
    \label{tab:incr-and-decr}
\end{table}

We can now define the set of reachability formulae $\Phi_N$
that evaluate over the markings in $N$ as follows:
\[\varphi ::= \true \mid \false \mid t \mid e_1 \bowtie e_2  \mid
 \deadlock \mid \varphi_{1} \land \varphi_{2} \mid \varphi_{1} \lor \varphi_{2}
  \mid \neg\varphi\]
where $e_1,e_2 \in E_N$, $t \in \transitions$ and $\bowtie$ $\in$ $\{<, \leq, =, \neq, >, \geq\}$.

The satisfaction relation for a formula $\varphi \in \Phi_N$
in a marking $M$ is defined as expected:
\begin{align*}
	&M \models \true && \\
    &M \models t && \mbox{iff } t \in \en(M) \\
    &M \models e_1 \bowtie e_2 && \mbox{iff } \eval[M][e_1] \bowtie \eval[M][e_2] \\
	&M \models \deadlock &&\mbox{iff } \en(M) = \emptyset \\
	&M \models \varphi_1 \land \varphi_2 &&\mbox{iff } M \models \varphi_1 \mbox{ and } M \models \varphi_2 \\
	&M \models \varphi_1 \lor \varphi_2 &&\mbox{iff } M \models \varphi_1 \mbox{ or } M \models \varphi_2 \\
	&M \models \neg \varphi&&\mbox{iff } M \not\models \varphi
\end{align*}

We want to be able to preserve at least one execution to the set
$\goal=\{ M \in \markings \mid M \models \varphi \}$ for
a given formula $\varphi$ describing the set of goal markings.
In order to achieve this, we define the set of interesting transitions
$\interesting[M][{\varphi}]$
for a formula $\varphi$ so that any firing sequence of transitions
from a marking that does not satisfy $\varphi$ leading to a marking
that satisfies $\varphi$ must contain at least one interesting transition.
Table~\ref{tab:bool-formula} provides the definition of
$\interesting[M][{\varphi}]$ that is
similar to the one presented in~\cite{boenneland2018start} for the
non-game setting, except for the conjunction where we in our setting
use Equation~(\ref{eq:interest-and}) that provides an
 optimisation for Condition~\ref{rule:safe} and possibly ends with a
smaller set of interesting transitions.

\begin{table}[t]
\centering
\scalebox{0.94}{
\def\arraystretch{1.2}
\begin{tabular}{lll}
    $\varphi$ & $\interest{M}{{\varphi}}$ & $\interesting[M][{\neg\varphi}]$ \\ \toprule

    $\mathit{deadlock}$ & $t \cup \postdecr{(\preset{t})} \cup \preincr{(\inhibpreset{t})}$ for some selected $t \in \en(M)$ & $\emptyset$  \\[2mm]

    $t$ & \pbox{20cm}{$\preincr{p}$ for some selected $p \in \preset{t}$ where $M(p) < \weights(p,t)$, or \\ $\postdecr{p}$ for some selected $p \in\inhibpreset{t}$ where $M(p) \geq \inhib(p,t)$} & $\postdecr{(\preset{t})}\cup \preincr{(\inhibpreset{t})}$ \\[3mm]

    $e_ 1 < e_ 2$ & $\decr[M][e_1] \cup \incr[M][e_2]$ & $\interesting[M][{e_1 \geq e_2}]$ \\

    $e_ 1 \leq e_ 2$ & $\decr[M][e_1] \cup \incr[M][e_2]$ & $\interesting[M][{e_1 > e_2}]$ \\

    $e_ 1 > e_ 2$ & $\incr[M][e_1] \cup \decr[M][e_2]$ & $\interesting[M][{e_1 \leq e_2}]$ \\

    $e_ 1 \geq e_ 2$ & $\incr[M][e_1] \cup \decr[M][e_2]$ & $\interesting[M][{e_1 < e_2}]$ \\[2mm]

    $e_ 1 = e_ 2$ & \pbox{10cm}{
$\decr[M][e_1] \cup \incr[M][e_2]$ if $\eval[M][e_1] > \eval[M][e_2]$
\\
$\incr[M][e_1] \cup \decr[M][e_2]$ if $\eval[M][e_1] < \eval[M][e_2]$}
& $\interesting[M][{e_1 \neq e_2}]$ \\[3mm]

    $e_ 1 \neq e_ 2$ & $\incr[M][e_1] \cup \decr[M][e_1] \cup\incr[M][e_2] \cup \decr[M][e_2]$ & $\interesting[M][{e_1 = e_2}]$ \\

    $\varphi_1 \land \varphi_2$ & Defined in Equation~(\ref{eq:interest-and}) & $\interesting[M][{\neg\varphi_1\lor\neg\varphi_2}]$ \\

    $\varphi_1 \lor \varphi_2$ & $\interesting[M][{\varphi_1}] \cup \interesting[M][{\varphi_2}]$ & $\interesting[M][{\neg\varphi_1\land\neg\varphi_2}]$
    \end{tabular}
}
    \caption{Interesting transitions of $\varphi$ (assuming $M \not\models \varphi$, otherwise
$\interesting[M][{\varphi}] = \emptyset$)}%
    \label{tab:bool-formula}
\end{table}

\begin{equation}\label{eq:interest-and}
\interest{M}{{\varphi_1 \land \varphi_2}} =
\begin{cases}
\interest{M}{{\varphi_1}} & \text{if } M\models\varphi_2\\
\interest{M}{{\varphi_2}} & \text{if } M\models\varphi_1\\
\interest{M}{{\varphi_1}} & \text{if } M\not\models\varphi_1 \text{ and } \interest{M}{{\varphi_1}} \subseteq \safe{M}\\
\interest{M}{{\varphi_2}} & \text{if } M\not\models\varphi_2 \text{ and } \interest{M}{{\varphi_2}} \subseteq \safe{M}\\
\interest{M}{{\varphi_i}} & \text{otherwise where } i \in \{1,2\}
\end{cases}
\end{equation}

The desired property of the set of interesting transitions is formulated
below.
\begin{lem}\label{lemma:reach}
	Let $N = \petrituple$ be a Petri net, $M \in \markings$ a marking, and $\varphi \in \Phi_N$ a formula.
	If $M \not\models \varphi$ and $M \rtrans[w] M'$ where $w \in \overline{\interest{M}{{\varphi}}}^*$
 then $M' \not\models \varphi$.
\end{lem}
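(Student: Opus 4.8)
The plan is a structural induction on the formula $\varphi$. Since the interesting-transition table defines $\interest{M}{\neg\varphi}$ by pushing the negation towards the atomic subformulas (for instance $\interest{M}{\neg(\varphi_1\land\varphi_2)}=\interest{M}{\neg\varphi_1\lor\neg\varphi_2}$, $\interest{M}{\neg(e_1<e_2)}=\interest{M}{e_1\ge e_2}$, and $\interest{M}{\neg t}$, $\interest{M}{\neg\deadlock}$ given directly), the cleanest formulation is to prove simultaneously, for every $M \rtrans[w] M'$, the dual pair of claims: \textbf{(a)}~if $M\not\models\varphi$ and $w\in\overline{\interest{M}{\varphi}}^*$ then $M'\not\models\varphi$; and \textbf{(b)}~if $M\models\varphi$ and $w\in\overline{\interest{M}{\neg\varphi}}^*$ then $M'\models\varphi$. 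Claim~(b) for $\varphi$ is exactly claim~(a) for $\neg\varphi$, so it suffices to treat $\varphi$ of the seven non-negated shapes, and since the rewriting of negations towards the leaves terminates, the induction is well founded. The cases $\varphi=\true$ and $\varphi=\false$ are vacuous or trivial.

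All atomic cases reduce to Lemma~\ref{lemma:incr}, whose contrapositive says that if $w$ avoids $\incr[M][e]$ then $\eval[M'][e]\le\eval[M][e]$, and if $w$ avoids $\decr[M][e]$ then $\eval[M'][e]\ge\eval[M][e]$. For $\varphi=e_1\bowtie e_2$ the set $\interest{M}{\varphi}$ is exactly the union of those $\incr$/$\decr$ sets of $e_1$ and $e_2$ whose transitions could turn the currently false comparison true, while $\interest{M}{\neg\varphi}$ collects those that could turn a true comparison false, so monotonicity of $\eval[\cdot][e_1]$ and $\eval[\cdot][e_2]$ along $w$ preserves $M\not\models e_1\bowtie e_2$ as well as $M\models e_1\bowtie e_2$ (the $=$ and $\ne$ cases use both directions of monotonicity). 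For $\varphi=t$: if $t\notin\en(M)$ then some $p\in\preset{t}$ has $M(p)<\weights(p,t)$ or some $p\in\inhibpreset{t}$ has $M(p)\ge\inhib(p,t)$, and the corresponding $\preincr{p}$ (resp.\ $\postdecr{p}$) lies in $\interest{M}{t}$, so along $w$ we keep $M'(p)\le M(p)$ (resp.\ $M'(p)\ge M(p)$) and hence $t\notin\en(M')$; dually, if $t\in\en(M)$ and $w$ avoids $\postdecr{(\preset{t})}\cup\preincr{(\inhibpreset{t})}$ then every place of $\preset{t}$ retains at least its arc weight and every place of $\inhibpreset{t}$ stays below its threshold, so $t\in\en(M')$. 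For $\varphi=\deadlock$: if $M\not\models\deadlock$, pick the selected $t\in\en(M)$; as $t$, $\postdecr{(\preset{t})}$ and $\preincr{(\inhibpreset{t})}$ all lie in $\interest{M}{\deadlock}$, the argument just given yields $t\in\en(M')$, so $M'\not\models\deadlock$; if $M\models\deadlock$ then $\en(M)=\emptyset$, hence $w$ is empty, $M'=M$, and $\interest{M}{\neg\deadlock}=\emptyset$ causes no trouble.

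The inductive cases for $\land$ and $\lor$ rest on a single observation: for any conjunction $\chi_1\land\chi_2$ with $M\not\models\chi_1\land\chi_2$, every branch of Equation~(\ref{eq:interest-and}) returns $\interest{M}{\chi_i}$ for some index $i$ with $M\not\models\chi_i$ --- immediate in the first four branches, and valid in the ``otherwise'' branch because there neither $M\models\chi_1$ nor $M\models\chi_2$. Given this, claim~(a) for $\varphi_1\land\varphi_2$ follows by applying the induction hypothesis to $\varphi_i$, yielding $M'\not\models\varphi_i$ and hence $M'\not\models\varphi_1\land\varphi_2$; claim~(a) for $\varphi_1\lor\varphi_2$ uses $\interest{M}{\varphi_1\lor\varphi_2}=\interest{M}{\varphi_1}\cup\interest{M}{\varphi_2}$ together with the induction hypothesis on both disjuncts; and claims~(b) are the mirror images --- for $\varphi_1\lor\varphi_2$ applying the observation to the conjunction $\neg\varphi_1\land\neg\varphi_2$, and for $\varphi_1\land\varphi_2$ using $\interest{M}{\neg(\varphi_1\land\varphi_2)}=\interest{M}{\neg\varphi_1}\cup\interest{M}{\neg\varphi_2}$, in each case combined with claim~(b) on the relevant subformulas. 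The one point that genuinely needs care is exactly this conjunction bookkeeping: the $\safe{M}$-driven optimisation in Equation~(\ref{eq:interest-and}) must never discard the interesting transitions of a conjunct that $M$ actually violates, and the observation above is precisely what guarantees that; everything else is a direct consequence of Lemma~\ref{lemma:incr}.
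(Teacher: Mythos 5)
Your proof is correct and takes essentially the same route as the paper: structural induction on $\varphi$, with Lemma~\ref{lemma:incr} doing the work at the atomic cases, and the conjunction case resting on exactly the observation underlying the paper's five-subcase analysis of Equation~(\ref{eq:interest-and}), namely that every branch returns $\interest{M}{{\varphi_i}}$ for a conjunct with $M\not\models\varphi_i$. The only difference is presentational: the paper delegates all cases other than $\varphi_1\land\varphi_2$ to Lemma~2 of~\cite{boenneland2018start}, whereas you prove them directly via the simultaneous dual claim that also handles negations pushed to the leaves --- a self-contained expansion rather than a genuinely different argument.
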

\begin{proof}
	Assume that $M \not\models \varphi$.
	The proof proceeds by structural induction on $\varphi$.
	All cases, with the exception of $\varphi_1 \land \varphi_2$, are proved in Lemma 2 presented in~\cite{boenneland2018start}.
Let $\varphi = \varphi_1 \land \varphi_2$.
		There are five subcases defined by Equation~\ref{eq:interest-and}:
		(1) $M\models\varphi_2$, (2) $M\models\varphi_1$, (3) $M\not\models\varphi_1$ and $\interest{M}{\varphi_1} \subseteq \safe{M}$, (4) $M\not\models\varphi_2$ and $\interest{M}{\varphi_2} \subseteq \safe{M}$, and (5) the default case.
		\begin{itemize}
			\item Case (1): Let $M\models\varphi_2$.
			Since we have $M \not\models \varphi$ and $M\models\varphi_2$ we must therefore have that $M \not\models \varphi_1$ by the semantics of $\varphi$.
			By Equation~\ref{eq:interest-and}, since $M\models\varphi_2$, we have $\interest{M}{\varphi_1 \land \varphi_2} = \interest{M}{\varphi_1}$.
			By the induction hypothesis this implies $M' \not\models \varphi_1$, and from this and the semantics of $\varphi$ we have $M' \not\models\varphi$.
			\item Case (2): Let $M\models\varphi_1$.
			This case is symmetric to Case (1) and follows the same approach.
			\item Case (3): Let $M\not\models\varphi_1$ and $\interest{M}{\varphi_1} \subseteq \safe{M}$.
			By Equation~\ref{eq:interest-and} we have $\interest{M}{\varphi_1 \land \varphi_2} = \interest{M}{\varphi_1}$.
			By the induction hypothesis this implies $M' \not\models \varphi_1$, and from this and the semantics of $\varphi$ we have $M' \not\models\varphi$.
			\item Case (4): Let $M\not\models\varphi_2$ and $\interest{M}{\varphi_2} \subseteq \safe{M}$.
			This case is symmetric to Case (3) and follows the same approach.
			\item Case (5): Default case.
			We have $M\not\models\varphi_1$ and $M\not\models\varphi_2$ due to Equation~\ref{eq:interest-and} and $\interest{M}{\varphi_1 \land \varphi_2} = \interest{M}{\varphi_i}$ for some $i \in \{1,2\}$.
			By the induction hypothesis this implies $M' \not\models \varphi_i$, and from this and the semantics of $\varphi$ we have $M' \not\models\varphi$.
            \qedhere
		\end{itemize}
\end{proof}

\begin{algorithm}[t]
        \SetKwInOut{Input}{input}
        \SetKwInOut{Output}{output}
        \Input{$N=\petrituple$ with $M \in \markings$ and a formula
$\varphi \in \Phi_N$}
        \Output{If there is $w \in \ltslabels_2^*$ s.t. $M \rtrans[w] M'$ and $M' \models \varphi$
then the algorithm returns \emph{true}.}

        We assume that all negations in $\varphi$ are only in front of atomic
        propositions (if not, we can use De Morgan's laws in order to guarantee this).

        $\ubound(x) := \infty$ for all $x \in \places \cup \transitions_2$;\label{line:init-inf}

        $\ubound(p) := M(p)$ for all $p \in P$ such that
               $\weights(p,t) \geq \weights(t,p)$ for every
               $t \in \preset{p} \cap \transitions_2$;\label{line:cond1}

        \Repeat{$\ubound(x)$ stabilises for all $x \in \places \cup \transitions_2$}{
          \ForEach{$t \in \transitions_2$}{
                $\displaystyle \ubound(t) := \min_{ p \in  \predecr{t}}
                  \lfloor \frac{\ubound(p)}{\weights(p,t)-\weights(t,p)}\rfloor$\label{line:tup}
          }
          \ForEach{$p \in \places$}{
                $\displaystyle \ubound(p) := M(p) + \sum_{\substack{t \in\; \preset{p} \; \cap \; \transitions_2 \\ \weights(t,p) > \weights(p,t)}} \ubound(t) \cdot \big(\weights(t,p) - \weights(p,t)\big)  $\label{line:pup}
}
        }
        \ForEach{$p \in \places$}{
        $\displaystyle \lbound(p) := M(p) - \sum_{\substack{t \in \transitions_2 \\
         \weights(p,t) > \weights(t,p)}} \ubound(t) \cdot \big(
          \weights(p,t) - \weights(t,p) \big)$\label{line:lb}
        }
        \Return{$\lbound, \ubound \models \varphi$};  \ \ \  *** See definition in Table~\ref{tab:lusat}
        \caption{$\mathit{reach}(N,M,\varphi)$: Overapproximation
         for checking if $\varphi$ can be satisfied by performing only player 2
         transitions, assuming that $\min \emptyset = \infty$ and $\sum \emptyset = 0$}%
    \label{alg:reachphi}
\end{algorithm}

\begin{table}[t]
\begin{align*}
        &\lbound,\ubound \models \true && \\
    &\lbound,\ubound \models t && \mbox{iff }
 \ubound(p) \geq \weights(p,t) \text{ for all } p \in \preset{t} \text{ and }
 \lbound(p) < \inhib(p,t) \text{ for all } p \in \inhibpreset{t}  \\
    &\lbound,\ubound \models \neg t && \mbox{iff }
 \lbound(p) < \weights(p,t) \text{ for some } p \in \preset{t} \text{ or }
 \ubound(p) \geq \inhib(p,t) \text{ for some } p \in \inhibpreset{t}  \\
    &\lbound,\ubound \models e_1 < e_2 && \mbox{iff } \lbound(e_1) < \ubound(e_2) \\
    &\lbound,\ubound \models e_1 \leq e_2 && \mbox{iff } \lbound(e_1) \leq \ubound(e_2) \\
    &\lbound,\ubound \models e_1 = e_2 && \mbox{iff } \max\{\lbound(e_1), \lbound(e_2)\} \leq
    \min\{\ubound(e_1),\ubound(e_2)\} \\
    &\lbound,\ubound \models e_1 \not= e_2 && \mbox{iff it is not the case that } \lbound(e_1)=\lbound(e_2)=\ubound(e_1)=\ubound(e_2) \\
    &\lbound,\ubound \models e_1 \geq e_2 && \mbox{iff } \ubound(e_1) \geq \lbound(e_2) \\
    &\lbound,\ubound \models e_1 > e_2 && \mbox{iff } \ubound(e_1) > \lbound(e_2) \\
    &\lbound,\ubound \models \deadlock &&\mbox{iff } \lbound,\ubound \not\models t \text{ for all } t \in \transitions \\
    &\lbound,\ubound \models \neg\deadlock &&\mbox{iff } \lbound,\ubound \models t \text{ for some } t \in \transitions \\
    &\lbound,\ubound \models \varphi_1 \land \varphi_2 &&\mbox{iff } \lbound,\ubound \models \varphi_1 \mbox{ and } \lbound,\ubound \models \varphi_2 \\
    &\lbound,\ubound \models \varphi_1 \lor \varphi_2 &&\mbox{iff } \lbound,\ubound \models \varphi_1 \mbox{ or } \lbound,\ubound \models \varphi_2 \\
\end{align*}

\begin{align*}
\lbound(c) & =c \text{ \ \ \ where $c$ is a constant} \\
\ubound(c) & =c \text{ \ \ \ where $c$ is a constant} \\
\lbound(e_1 + e_2) & = \lbound(e_1) + \lbound(e_2) \\
\ubound(e_1 + e_2) & = \ubound(e_1) + \ubound(e_2) \\
\lbound(e_1 - e_2) & = \lbound(e_1) - \ubound(e_2) \\
\ubound(e_1 - e_2) & = \ubound(e_1) - \lbound(e_2) \\
\lbound(e_1 * e_2) & = \min \{ \lbound(e_1)\cdot\lbound(e_2), \lbound(e_1)\cdot\ubound(e_2),
\ubound(e_1)\cdot\lbound(e_2), \ubound(e_1)\cdot\ubound(e_2)\} \\
\ubound(e_1 * e_2) & = \max \{ \lbound(e_1)\cdot\lbound(e_2), \lbound(e_1)\cdot\ubound(e_2),
\ubound(e_1)\cdot\lbound(e_2), \ubound(e_1)\cdot\ubound(e_2)\} \\
\end{align*}

\caption{Definition of $\lbound,\ubound \models \varphi$ assuming
that $\lbound(p)$ and $\ubound(p)$ are given for all $p \in P$}%
\label{tab:lusat}
\end{table}

\noindent
As a next step, we provide an algorithm that returns \emph{true} whenever there is a sequence
of player 2 actions that leads to a marking satisfying a given formula $\varphi$
(and hence overapproximates Condition \textbf{V} from the definition of a stable reduction).
The pseudocode is given in Algorithm~\ref{alg:reachphi}.
The algorithm uses an extended definition of formula satisfiability that,
instead of asking whether a formula holds in a given marking, specifies instead a range of markings
by two functions
$\lbound : \places \to \mathbb{N}^0$ for fixing a lower  bound on the number of tokens in places
and $\ubound : \places \to \mathbb{N}^0 \cup \{ \infty \}$ for specifying an upper bound.
A marking $M$ belongs to the range $\lbound, \ubound$ iff for all places $p \in P$ we
have $\lbound(p) \leq M(p) \leq \ubound(p)$. The extended satisfability predicate
$\lbound, \ubound \models \varphi$ is given in Table~\ref{tab:lusat}
and it must hold whenever there is a marking in the range specified by
$\lbound$ and $\ubound$ such that the marking satisfies the formula $\varphi$.
Finally, Algorithm~\ref{alg:reachphi} computes a safe overapproximation of the lower and upper bounds
such that
if $M \rtrans[w] M'$ for some $w \in \transitions_2^*$ then $\lbound(p) \leq M'(p) \leq \ubound(p)$ for
all $p \in P$.


\begin{lem}\label{lemma:reach2}
Let $N=\petrituple$ be a Petri net game, $M \in \markings$ a marking on $N$ and
$\varphi \in \Phi_N$ a formula.
If there is $w \in \ltslabels_2^*$ s.t. $M \rtrans[w] M'$ and $M' \models \varphi$
then $\mathit{reach}(N,M,\varphi)=\mathit{true}$.
\end{lem}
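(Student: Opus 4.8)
The plan is to split the statement into two soundness claims about Algorithm~\ref{alg:reachphi}. \emph{(A)} The functions $\lbound,\ubound$ returned by the algorithm over-approximate the markings reachable from $M$ using only player~$2$ transitions: for every $w \in \transitions_2^*$ with $M \rtrans[w] M'$ and every $p \in \places$ we have $\lbound(p) \leq M'(p) \leq \ubound(p)$. \emph{(B)} The extended satisfiability predicate of Table~\ref{tab:lusat} is sound with respect to ranges: if $M''$ is any marking with $\lbound(p) \leq M''(p) \leq \ubound(p)$ for all $p \in \places$ and $M'' \models \varphi$, then $\lbound,\ubound \models \varphi$. The lemma then follows immediately: a witness $w$ yields, via~(A), a marking $M'$ inside the range $[\lbound,\ubound]$ with $M' \models \varphi$, and~(B) gives $\lbound,\ubound \models \varphi$, i.e.\ $\mathit{reach}(N,M,\varphi) = \mathit{true}$.

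For~(A) I would first argue that the \textbf{Repeat}-loop terminates: every assignment on lines~\ref{line:tup} and~\ref{line:pup} is monotone in the $\ubound$-values and each $\ubound(x)$ starts at $\infty$ (or, for places never increased by a player~$2$ transition, at $M(p)$ by line~\ref{line:cond1}), so the sequence of $\ubound$-vectors is non-increasing; over $\mathbb{N}^0 \cup \{\infty\}$ with finitely many variables it stabilises, i.e.\ a fixed point of the equations on lines~\ref{line:tup}--\ref{line:pup} is reached. I would then prove, as an invariant preserved by every individual assignment (and established by the initialisation lines~\ref{line:init-inf}--\ref{line:cond1}), that for every $w \in \transitions_2^*$ with $M \rtrans[w] M'$ we have (i)~$\#_w(t) \leq \ubound(t)$ for all $t \in \transitions_2$ and (ii)~$M'(p) \leq \ubound(p)$ for all $p \in \places$, where $\#_w(t)$ denotes the number of occurrences of $t$ in $w$. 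Both parts use the standard marking equation $M'(p) = M(p) + \sum_{t \in \transitions_2}\#_w(t)\bigl(\weights(t,p) - \weights(p,t)\bigr)$: dropping the non-positive summands and applying~(i) yields~(ii); and for~(i), for any $p \in \predecr{t}$ non-negativity $M'(p) \geq 0$ together with~(i) for the other transitions and the defining equation of $\ubound(p)$ on line~\ref{line:pup} gives $\#_w(t)\cdot\bigl(\weights(p,t)-\weights(t,p)\bigr) \leq \ubound(p)$, hence $\#_w(t) \leq \lfloor \ubound(p) / (\weights(p,t)-\weights(t,p)) \rfloor$, which matches line~\ref{line:tup} after taking the minimum over $\predecr{t}$. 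Finally $\lbound(p) \leq M'(p)$ follows by dropping the non-negative summands of the marking equation and using~(i) with line~\ref{line:lb}; infinite bounds (in particular for transitions with $\predecr{t} = \emptyset$, where $\ubound(t) = \min\emptyset = \infty$) are handled vacuously throughout.

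For~(B) I would use structural induction on $\varphi$, assuming negations are pushed to the atoms. The key ingredient is an interval-arithmetic sub-lemma, proved by structural induction on expressions: $\lbound(e) \leq \eval[M''][e] \leq \ubound(e)$ for every $e \in E_N$ and every $M'' \in [\lbound,\ubound]$, using the rules at the bottom of Table~\ref{tab:lusat}; the product case needs the min/max-over-products form precisely because bounds obtained from subtraction may be negative. From it, the comparison atoms $e_1 \bowtie e_2$ are immediate (e.g.\ $\eval[M''][e_1] < \eval[M''][e_2]$ forces $\lbound(e_1) < \ubound(e_2)$, and similarly for $=$ and $\neq$), the transition atoms $t$ and $\neg t$ follow by unwinding the enabledness condition and reading $\lbound(p) \leq M''(p) \leq \ubound(p)$ in the direction the witnessing place requires, the $\deadlock$ atom reduces to the transition atoms, and $\land,\lor$ are direct from the induction hypotheses.

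The main obstacle is part~(A): phrasing the invariant so that it is genuinely preserved by the \emph{partial} updates inside the nested \textbf{ForEach}-loops --- where some $\ubound(t)$ have already been lowered while others still carry their previous value --- and keeping the bookkeeping with $\infty$ airtight. The termination argument, the interval sub-lemma, and the propositional cases of~(B) are routine; only the atomic transition and $\deadlock$ cases of~(B) require an elementary case split on which place witnesses (non-)enabledness.
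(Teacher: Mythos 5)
Your proposal is correct and follows essentially the same route as the paper's proof: soundness of the computed bounds via a repeat-loop invariant bounding both the number of firings of each player-2 transition and the token counts in places, combined with soundness of the range predicate $\lbound,\ubound \models \varphi$ proved by structural induction with interval bounds $\lbound(e) \leq \eval[M'][e] \leq \ubound(e)$ on expressions. The only inessential difference is that the paper's place invariant bounds $M(p)$ plus the cumulative tokens added along $w$, whereas you bound the final marking and recover the firing-count bound from the marking equation and non-negativity, both arguments resting on the same monotonicity of the $\ubound$ updates that you correctly flag as the delicate point.
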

\begin{proof}
Algorithm~\ref{alg:reachphi} first computes for each place $p \in \places$
the upper bound $\ubound(p)$ and lower bound $\lbound(p)$
on the number of tokens that can appear in $p$
by performing any sequence of player 2 transitions, starting from the marking $M$.
The bounds are then used to return the value of the expression
$\lbound,\ubound \models \varphi$ that is defined in Table~\ref{tab:lusat}.

We shall first notice
if there is a marking $M'$ such that $\lbound(p) \leq M'(p) \leq \ubound(p)$
for all $p \in \places$ and $M' \models \varphi$ then $\lbound,\ubound \models \varphi$ holds.
This can be proved by a straightforward structural induction
on $\varphi$ while following the cases in Table~\ref{tab:lusat} where the
functions $\lbound$ and $\ubound$ are extended to arithmetical expressions used
in the query language such that for every marking $M'$ (as given above) and for every
arithmetical expressions $e$ we have $\lbound(e) \leq \eval[M'][e] \leq \ubound(e)$.

What remains to be established is the property that Algorithm~\ref{alg:reachphi}
correctly computes the lower and upper bounds for all places in the net. We do this
by proving the invariant for the repeat-until loop that claims that
for every $w \in \ltslabels_2^*$ such that $M \rtrans[w] M'$ we have
\begin{enumerate}
\item\label{inv1} $\displaystyle M(p) + \sum_{\substack{t \in w \\ \weights(t,p) > \weights(p,t)}}
       \big(\weights(t,p) - \weights(p,t)\big) \leq \ubound(p)$ for all $p \in \places$, and
\item\label{inv2} $\#_t(w) \leq \ubound(t)$ for all $t \in \transitions_2$ where
     $\#_t(w)$ denotes the number of occurences of the transition $t$ in the sequence $w$.
\end{enumerate}
Here the notation $t \in w$ means that a summand is added for every occurence of $t$ in the
sequence $w$.
We note that invariant (\ref{inv1}) clearly implies that
whenever $M \rtrans[w] M'$ for $w \in \ltslabels_2^*$ then
$M'(p) \leq \ubound(p)$ for all $p \in \places$.
Notice that the repeat-until loop in Algorithm~\ref{alg:reachphi} clearly terminates
since during the iteration of the loop $\ubound$ can only become smaller.

First, we notice that before entering the repeat-until loop, the invariant holds
because intitially the upper bound values are all set to $\infty$ and only
at line~\ref{line:cond1} the upper bound for a place $p$ is set
to $M(p)$ provided that the firing of any transition $t \in \transitions_2$
can never increase the number of tokens in $p$. This clearly satisfies
invariant (\ref{inv1}).

Let us now assume that both (\ref{inv1}) and (\ref{inv2}) hold at the beginning
of the execution of the repeat-until loop. Suppose that the value
$\ubound(t)$ is decreased for some transition $t$ by the assignment at line~\ref{line:tup}.
This means that there is a place $p \in \predecr{t}$ such that $\weights(p,t)>\weights(t,p)$,
meaning that firing of $t$ removes $\weights(t,p)-\weights(p,t)$ tokens from $p$.
As there can be at most $\ubound(p)$ tokens added to the place $p$
due to invariant (\ref{inv1}), this limits the number of times that the transition $t$
can fire to $\lfloor \frac{\ubound(p)}{\weights(t,p)-\weights(p,t)} \rfloor$
and hence it preserves invariant (\ref{inv2}).
Similarly, suppose that the value of $\ubound(p)$ is decreased for some place $p$ by the assignment
at line~\ref{line:pup}. Due to invariant (\ref{inv2}), we know that
every transition $t \in \transitions_2$ can be fired at most
$\ubound(t)$ times and hence adds at most $\ubound(t) \cdot \big(\weights(t,p) - \weights(p,t)\big)$
tokens to $p$.
As we add those contributions for all such transitions together with the number $M(p)$
of tokens in the starting marking $M$, we satisfy also invariant (\ref{inv1}).

Finally, the assignment at line~\ref{line:lb} provides a safe lower bound on the number
of tokens that can be in the place $p$, as due to invariant (\ref{inv2}) we know that
$\ubound(t)$ is the maximum number of times a transition $t$ can fire, and we subtract
the number of tokens that each $t$ removes from $p$ by $\ubound(t)$.
Hence, we can conclude that whenever $M \rtrans[w] M'$ for $w \in \ltslabels_2^*$ then
$\lbound(p) \leq M'(p) \leq \ubound(p)$ for all $p \in \places$ and the correctness
of the lemma is established.
\end{proof}

\begin{exa}\label{ex:algo-v}
In Figure~\ref{fig:algo-v-fig} we see a Petri net consisting of four places $\places = \{p_1, p_2, p_3,p_4\}$ and three player $2$ transitions $\transitions_2 = \{t_1, t_2, t_3\}$.
The weights are given as seen in the figure (arcs without any annotations have the default weight $1$)
and the initial marking contains three tokens in the place $p_1$.
Initially, for all $x \in \places \cup \transitions$ we have $\ubound(x) = \infty$ as seen in
line~\ref{line:init-inf} of Algorithm~\ref{alg:reachphi}.
In line~\ref{line:cond1} we can set the upper bound of some places if the number of tokens are non-increasing, i.e.\ for all $t \in \preset{p} \cap \transitions_2$ we have $\weights(p,t) \geq \weights(t,p)$.
In Figure~\ref{fig:algo-v-fig} this is the case only for $p_1$, we therefore have $\ubound(p_1) = M(p_1) = 3$.
Next, the upper bound for all places and transitions are calculated through a repeat-until loop.
The upper bound for transitions are found by checking, given the current upper bound on places, how many times we can fire a transition.
In line~\ref{line:tup} we get
\[\ubound(t_1) = \left\lfloor\frac{\ubound(p_1)}{\weights(p_1,t_1)-\weights(t_1,p_1)}\right\rfloor =
\left\lfloor\frac{3}{1-0}\right\rfloor = 3\] and
\begin{align*}
\ubound(t_2)
    &= \min \left\{\left\lfloor\frac{\ubound(p_1)}{\weights(p_1,t_2)-\weights(t_2,p_1)}\right\rfloor, \left\lfloor\frac{\ubound(p_3)}{\weights(p_3,t_2)-\weights(t_2,p_3)}\right\rfloor\right\} \\
    &= \min \left\{ \left\lfloor\frac{3}{2-0}\right\rfloor, \left\lfloor\frac{\infty}{1-0} \right\rfloor\right\} \\
    &= \min\{1,\infty\} = 1.
\end{align*}
In the next iteration, at line~\ref{line:pup} we get
\[
    \ubound(p_2) = M(p_2) + \ubound(t_1) \cdot (\weights(t_1,p_2) - \weights(p_2,t_1)) = 0 + 3 \cdot (1 - 0) = 3\] and similarly
\[\ubound(p_4) = M(p_4) + \ubound(t_2) \cdot (\weights(t_2,p_4) - \weights(p_4,t_2)) = 0 + 1 \cdot (1 - 0) = 1.\]
Afterwards, there are no further changes to be made to the upper bounds and the repeat-until loop terminates.
Finally, the calculated lower bounds for all places are $0$ in our example.
\end{exa}

\begin{figure}[t]
    	\centering
    	\begin{tikzpicture}[font=\scriptsize,xscale=2.4,yscale=1.8]
		\tikzstyle{arc}=[->,>=stealth,thick]
		\tikzstyle{every place}=[minimum size=6mm,thick]
		\tikzstyle{every transition}=[fill=black,minimum width=2mm,minimum height=5mm]
		\tikzstyle{oppTransition}=[draw=black,fill=white,minimum width=2mm,minimum height=5mm,thick]
		\tikzstyle{token}=[fill=white,text=black]

		\begin{scope}
			\node [place,label=$p_1$] at (0,0.6) (p1) {$\bullet \bullet \bullet$};

			\node [oppTransition,label=$t_1$] at (1,0.6) (t1) {};

			\node [oppTransition,label=$t_2$] at (1,-0.6) (t2) {};

			\node [place,label=$p_2$] at (2,0.6) (p2) {};

			\node [place,label=$p_3$] at (0,-0.6) (p3) {};

			\node [oppTransition,label=$t_3$] at (-1,-0.6) (t3) {};

			\node [place,label=$p_4$] at (2,-0.6) (p4) {};

			\draw [->,>=stealth,thick] (p1) to (t1);
			\draw [->,>=stealth,thick] (p1) to node[pos=0.5,above]{$2$} (t2);
			\draw [->,>=stealth,thick] (t1) to (p2);
			\draw [->,>=stealth,thick] (p3) to (t2);
			\draw [->,>=stealth,thick] (t3) to (p3);
			\draw [->,>=stealth,thick] (t2) to (p4);
		\end{scope}
    	\end{tikzpicture}
    \caption{Example Petri Net for Algorithm~\ref{alg:reachphi}}%
    \label{fig:algo-v-fig}
\end{figure}
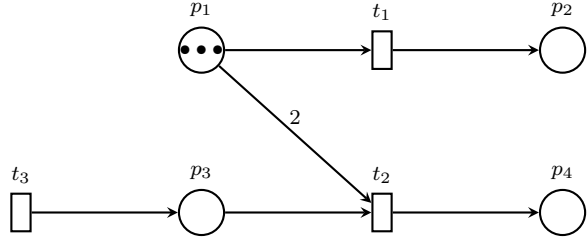

Before we can state our main theorem, we need to find an overapproximation method for determining
safe transitions. This can be done by analysing the increasing presets and postsets of transitions
as demonstrated in the following lemma.

\begin{lem}[Safe Transition]\label{lemma:trans-safe}
        Let $N = \petrituple$ be a \PNG
and $t \in \transitions$ a transition.
If $\postincr{t} \cap \preset{T_2} = \emptyset$ and $\predecr{t} \cap \inhibpreset{T_2} = \emptyset$ then $t$ is safe in any marking of $N$.
\end{lem}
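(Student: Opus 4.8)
The plan is to unfold the definition of a safe action and reduce the claim to the elementary observation that in a Petri net the net token effect of a transition does not depend on its position in an executable firing sequence. First I would dispose of the degenerate case: safety of an action is only defined for a player~$1$ action in a state $M$ with $\en_2(M)=\emptyset$, and in any such marking $\en_1(M)\subseteq\transitions_1$; hence if $t\in\transitions_2$ the statement is vacuous. So let $M\in\markings$ be a marking with $\en_2(M)=\emptyset$ and $t\in\en_1(M)$; I must show $t\in\safe{M}$, i.e.\ for every $w\in(\transitions_1\setminus\{t\})^*$ with $M\rtrans[w]M'$, $\en_2(M')=\emptyset$ and $M\rtrans[tw]M''$, that $\en_2(M'')=\emptyset$.

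Fix such $w$, $M'$ and $M''$. The key step is that, since every transition $t'$ changes the marking additively by $p\mapsto\weights(t',p)-\weights(p,t')$, executability of both $M\rtrans[w]M'$ and $M\rtrans[tw]M''$ forces
\[
M''(p)=M'(p)+\weights(t,p)-\weights(p,t)\qquad\text{for all }p\in\places.
\]
No enabledness reasoning is needed here since executability of $tw$ is part of the hypothesis; this is just the standard marking equation applied to the sequences $w$ and $tw$, which differ by a single occurrence of $t$.

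It remains to show $\en_2(M'')=\emptyset$. Pick an arbitrary $t_2\in\transitions_2$; since $t_2\notin\en_2(M')$, the firing rule provides one of two reasons: (a) $M'(p)<\weights(p,t_2)$ for some $p\in\preset{t_2}$, or (b) $M'(p)\geq\inhib(p,t_2)$ for some $p\in\inhibpreset{t_2}$. In case (a) we have $p\in\preset{T_2}$, so by the hypothesis $\postincr{t}\cap\preset{T_2}=\emptyset$ we get $p\notin\postincr{t}$, hence $\weights(t,p)\leq\weights(p,t)$; with the equation above this gives $M''(p)\leq M'(p)<\weights(p,t_2)$, so $t_2$ is still disabled in $M''$. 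In case (b) we have $p\in\inhibpreset{T_2}$, so by the hypothesis $\predecr{t}\cap\inhibpreset{T_2}=\emptyset$ we get $p\notin\predecr{t}$, hence $\weights(p,t)\leq\weights(t,p)$, so $M''(p)\geq M'(p)\geq\inhib(p,t_2)$, and again $t_2$ is disabled in $M''$. As $t_2$ was arbitrary, $\en_2(M'')=\emptyset$, which completes the argument.

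I do not expect a genuinely hard step. The only things to watch are the sign bookkeeping --- pairing ``$t$ never increases a place in the regular preset of some player~$2$ transition'' with preservation of the ``too few tokens'' failure, and ``$t$ never decreases a place in the inhibitor preset of some player~$2$ transition'' with preservation of the ``inhibitor threshold reached'' failure --- together with the trivial but load-bearing remark that the extended sequence $tw$ is assumed executable, so the marking equation can be invoked verbatim.
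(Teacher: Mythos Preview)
Your proof is correct and follows essentially the same approach as the paper's own proof: derive the marking equation $M''(p)=M'(p)+\weights(t,p)-\weights(p,t)$ from the executability of $w$ and $tw$, then for each $t_2\in\transitions_2$ use the disabling witness in $M'$ together with the two hypotheses to show that the same witness disables $t_2$ in $M''$. Your treatment of the degenerate case $t\in\transitions_2$ and your explicit justification of the marking equation are slightly more careful than the paper's presentation, but the argument is identical in structure and content.
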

\begin{proof}
        Assume $\postincr{t} \cap \preset{T_2} = \emptyset$ and $\predecr{t} \cap \inhibpreset{T_2} = \emptyset$.
        We prove directly that $t$ is safe in $M$.
        Let     $w \in {(\transitions_1 \setminus \{ t \})}^*$ s.t. $M \rtrans[w] M'$, $\en_2(M') = \emptyset$, and $M \rtrans[tw] M''$.
        The only difference between $M'$ and $M''$ is that $t$ is fired first and we have $M''(p') = M'(p') + \weights(t,p') - \weights(p',t)$ for all $p \in \places$.
        Then for all $t' \in \transitions_2$ we have that there either exists $p \in \preset{t'}$ s.t. $M'(p) <\weights(p,t')$, or there exists $p' \in \inhibpreset{t'}$ s.t. $M'(p') \geq \inhib(p',t')$.
        In the first case, since $\postincr{t} \cap \preset{T_2} = \emptyset$, we must have $\weights(t,p) \leq \weights(p,t)$ which implies $M''(p) \leq M'(p)$ and $t' \notin \en(M'')$.
        In the second case, since $\predecr{t} \cap \inhibpreset{T_2} = \emptyset$, we must have $\weights(t,p') \geq \weights(p',t)$, which implies $M''(p') \geq M'(p')$ and $t' \notin \en(M'')$.
        Therefore $t$ is safe in $M$.
\end{proof}

We can now provide a list of syntactic conditions
that guarantee the stability of a given reduction and state the main theorem of this section.

\begin{thm}[Stable Reduction Preserving Closure]\label{prop:turn-game-preservation}
Let $N = \petrituple$ be a \PNG, $\varphi$ a formula, and \reduction a reduction of $\lts(N)$ such that for all $M \in \markings$ the following conditions hold.
\begin{enumerate}
    \item\label{item:contested} If $\en_1(M) \neq \emptyset$ and $\en_2(M) \neq \emptyset$ then $\en(M) \subseteq \reduction(M)$.
    \item\label{item:safe} If $\en_1(M) \cap \reduction(M) \nsubseteq \safe{M}$ then $\en_1(M) \subseteq \reduction(M)$.
    \item\label{item:interesting} $\interest{M}{\varphi} \subseteq \reduction(M)$
    \item\label{item:opponent-1} If $\en_1(M) = \emptyset$ then $\transitions_1 \subseteq \reduction(M)$.
    \item\label{item:opponent-2} If $\en_2(M) = \emptyset$ then $\transitions_2 \subseteq \reduction(M)$.
    \item\label{item:prop-disabled} For all $t \in \reduction(M)$ if $t \notin en(M)$ then either
	\begin{enumerate}
		\item\label{item:disabled-1} there exists $p \in \preset{t}$ s.t. $M(p) < \weights(p,t)$ and $\preincr{p} \subseteq \reduction(s)$, or
		\item\label{item:disabled-2} there exists $p \in \inhibpreset{t}$ s.t. $M(p) \geq \inhib(p,t)$ and $\postdecr{p} \subseteq \reduction(s)$.
	\end{enumerate}
    \item\label{item:prop-enabled} For all $t \in \reduction(M)$ if $t \in \en(M)$ then
    \begin{enumerate}
		\item\label{item:enabled-1} for all $p \in \predecr{t}$ we have $\postset{p} \subseteq \reduction(M)$, and
		\item\label{item:enabled-2} for all $p \in \postincr{t}$ we have $\inhibpostset{p} \subseteq \reduction(M)$.
	\end{enumerate}
	\item\label{item:deadlock} If $\en_2(M) \neq \emptyset$ then there exists $t \in \en_2(M) \cap \reduction(M)$ s.t. $\postdecr{(\preset{t})} \cup \preincr{(\inhibpreset{t})} \subseteq \reduction(M)$.
    \item\label{item:visible} If $\en_1(M) = \emptyset$ and $\mathit{reach}(N,M,\varphi)=\mathit{true}$ then $\en(M) \subseteq \reduction(M)$.
\end{enumerate}
Then \reduction satisfies~\ref{rule:stub-init},~\ref{rule:stub},~\ref{rule:reach},~\ref{rule:game-1},~\ref{rule:game-2},~\ref{rule:safe},~\ref{rule:visible} and~\ref{rule:dead}.
\end{thm}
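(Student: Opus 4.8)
The plan is to derive each of the eight abstract stable-reduction conditions \ref{rule:stub-init}, \ref{rule:stub}, \ref{rule:reach}, \ref{rule:game-1}, \ref{rule:game-2}, \ref{rule:safe}, \ref{rule:visible}, \ref{rule:dead} from the nine syntactic items, instantiating the fixed interesting set $\interest{s}{\goal}$ by the set $\interest{M}{\varphi}$ of Table~\ref{tab:bool-formula} for $s = M$ and $\goal = \{M' \in \markings \mid M' \models \varphi\}$. Three of the eight are essentially immediate: \ref{rule:stub-init} is item~\ref{item:contested} verbatim; \ref{rule:safe} is the disjunctive reformulation of item~\ref{item:safe} (and stays sound if $\safe{M}$ is replaced by the under-approximation of Lemma~\ref{lemma:trans-safe}, since shrinking the safe set only strengthens the hypothesis of item~\ref{item:safe}); and \ref{rule:reach} follows from item~\ref{item:interesting} together with Lemma~\ref{lemma:reach}, which guarantees that $\interest{M}{\varphi}$ really is an interesting set for $M$ and $\goal$.

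For the remaining, dynamic conditions I would first isolate a reusable fact: if $t \in \reduction(M)$, $t \notin \en(M)$, and $M \rtrans[w] M'$ with $w \in \overline{\reduction(M)}^*$, then $t \notin \en(M')$. This follows from item~\ref{item:prop-disabled}: $t$ is disabled in $M$ either because some $p \in \preset{t}$ has $M(p) < \weights(p,t)$ with $\preincr{p} \subseteq \reduction(M)$, or because some $p \in \inhibpreset{t}$ has $M(p) \geq \inhib(p,t)$ with $\postdecr{p} \subseteq \reduction(M)$; enabling $t$ would require raising $M(p)$ in the first case or lowering it in the second, and by Lemma~\ref{lemma:incr} applied to the expression $p$ this forces $w$ to contain a transition of $\preincr{p}$ (resp.\ $\postdecr{p}$), all of which are stubborn — contradicting $w \in \overline{\reduction(M)}^*$. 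From this fact, \ref{rule:game-1} follows from item~\ref{item:opponent-2} (if $\en_2(M) = \emptyset$ then every $t \in \transitions_2 \subseteq \reduction(M)$ is disabled in $M$, hence in $M'$, so $\en_2(M') = \emptyset$), and \ref{rule:game-2} symmetrically from item~\ref{item:opponent-1}. Condition \ref{rule:dead} follows from item~\ref{item:deadlock}: the enabled stubborn $t \in \en_2(M) \cap \reduction(M)$ it supplies can be disabled along a non-stubborn $w$ only by lowering some $p \in \preset{t}$ — which by Lemma~\ref{lemma:incr} needs a transition of $\postdecr{p} \subseteq \postdecr{(\preset{t})} \subseteq \reduction(M)$ in $w$ — or by raising some $p \in \inhibpreset{t}$ — needing a transition of $\preincr{p} \subseteq \preincr{(\inhibpreset{t})} \subseteq \reduction(M)$ — both impossible, so $a := t$ witnesses \ref{rule:dead}.

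The hard part will be condition \ref{rule:stub}, the commutation property. Given $w = t_1 \cdots t_n \in \overline{\reduction(M)}^*$, $t \in \reduction(M)$ and $M \rtrans[wt] M'$, the reusable fact above applied contrapositively (since $t$ is enabled after $w$) gives $t \in \en(M)$, so $M \rtrans[t] M_0$ with $M_0(p) = M(p) + \weights(t,p) - \weights(p,t)$ for all $p$. I would then fire $w$ from $M_0$, proving by induction on the prefix length that it stays executable and that the marking reached from $M_0$ after $t_1 \cdots t_i$ differs from the one reached from $M$ after $t_1 \cdots t_i$ by exactly the constant vector $\weights(t,\cdot) - \weights(\cdot,t)$ — legitimate as long as each $t_i$ fires, since transition effects are additive. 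In the induction step, a place $p$ whose count is strictly decreased by $t$ lies in $\predecr{t}$, so part (a) of item~\ref{item:prop-enabled} gives $\postset{p} \subseteq \reduction(M)$; as $t_{i+1}$ is non-stubborn it does not consume from $p$, so the decrease cannot violate a weight requirement of $t_{i+1}$. Dually, a place $p$ whose count is strictly increased by $t$ lies in $\postincr{t}$, so part (b) of item~\ref{item:prop-enabled} gives $\inhibpostset{p} \subseteq \reduction(M)$ and $t_{i+1}$ has no inhibitor arc from $p$, so the increase cannot violate an inhibitor requirement. Carrying the constant-difference invariant through to the end yields $M \rtrans[t] M_0 \rtrans[w] M'$, i.e.\ \ref{rule:stub}.

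Finally, \ref{rule:visible} will be handled by a case split on $M$: if $\en_2(M) = \emptyset$ it is vacuous; if $\en_1(M) \neq \emptyset$ and $\en_2(M) \neq \emptyset$ it follows from item~\ref{item:contested}; and if $\en_1(M) = \emptyset$ while a goal marking is reachable from $M$ by a (possibly empty) sequence of player~$2$ actions, then Lemma~\ref{lemma:reach2} gives $\mathit{reach}(N,M,\varphi) = \true$, so item~\ref{item:visible} yields $\en(M) \subseteq \reduction(M)$, in particular $\en_2(M) \subseteq \reduction(M)$. Having established \ref{rule:stub-init}, \ref{rule:stub}, \ref{rule:reach}, \ref{rule:game-1}, \ref{rule:game-2}, \ref{rule:safe}, \ref{rule:visible} and \ref{rule:dead}, we conclude that $\reduction$ is stable. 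I expect the constant-difference bookkeeping in \ref{rule:stub} to be the only genuinely delicate step; everything else is a short application of Lemmas~\ref{lemma:incr}, \ref{lemma:reach}, \ref{lemma:reach2} and the syntactic items.
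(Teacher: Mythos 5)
Your proposal is correct and follows essentially the same route as the paper: each abstract condition is discharged by the corresponding syntactic item(s), with Lemma~\ref{lemma:reach} for~\ref{rule:reach}, Lemma~\ref{lemma:reach2} for~\ref{rule:visible}, and preset/postset disabling arguments (items~\ref{item:prop-disabled} and~\ref{item:prop-enabled}) for~\ref{rule:stub}, \ref{rule:game-1}, \ref{rule:game-2} and~\ref{rule:dead}. The only difference is presentational: you isolate the persistence-of-disabledness fact as a reusable lemma and prove the second half of~\ref{rule:stub} by a constructive constant-difference induction, where the paper argues by contradiction that no transition of $w$ can be disabled by firing $t$ first — the underlying use of Conditions~\ref{item:enabled-1} and~\ref{item:enabled-2} is identical.
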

\begin{proof}
	We shall argue that any reduction $\reduction$ satisfying the conditions of the theorem also satisfies the~\ref{rule:stub-init},~\ref{rule:stub},~\ref{rule:reach},~\ref{rule:game-1},~\ref{rule:game-2},~\ref{rule:safe},~\ref{rule:visible}, and~\ref{rule:dead} conditions.
	\begin{itemize}[leftmargin=10mm]
		\item[(\ref{rule:stub-init})]
		Follows from Condition~\ref{item:contested}.
		\item[(\ref{rule:stub})]
		Let $M,M' \in \markings$ be markings, $t \in \reduction(M)$, and $w \in \overline{\reduction(M)}^*$.
		We will show that if $M \rtrans[wt] M'$ then $M \rtrans[tw] M'$.
    		Let $M_w \in \mathcal{M}(N)$ be a marking s.t. $M \rtrans[w] M_w$.
		Assume for the sake of contradiction that $t \notin \en(M)$.
		As $t$ is disabled in $M$, there must be $p \in \preset{t}$ such that $M(p) < \weights(p,t)$ or there is $p \in \inhibpreset{t}$ such that $M(p) \geq \inhib(p,t)$.
		In the first case, due to Condition~\ref{item:disabled-1} all the transitions that can add tokens to $p$ are included in $\reduction(M)$.
		Since $w \in \overline{\reduction(M)}^*$ this implies that $M_w(p) < \weights(p,t)$ and $t \notin \en(M_w)$ contradicting our assumption that $M_w \rtrans[t] M'$.
		In the second case, due to Condition~\ref{item:disabled-2} all the transitions that can remove tokens from $p$ are included in $\reduction(M)$.
		Since $w \in \overline{\reduction(M)}^*$ this implies that $M_w(p) \geq \inhib(p,t)$ and $t \notin \en(M_w)$ contradicting our assumption that $M_w \rtrans[t] M'$.
		Therefore we must have that $t \in \en(M)$.

		Since $t \in \en(M)$ there is $M_t \in \markings$ s.t. $M \rtrans[t] M_t$.
		We have to show that $M_t \rtrans[w] M'$ is possible.
 		For the sake of contradiction, assume that this is not the case.
		Then there must exist a transition $t'$ that occurs in $w$ that became disabled because $t$ was fired.
		There are two cases: $t$ removed one or more tokens from a shared pre-place $p \in \predecr{t} \cap \preset{t'}$ or added one or more tokens to a place $p \in \postincr{t} \cap \inhibpreset{t'}$.
		In the first case, due to Condition~\ref{item:enabled-1} all the transitions that can remove tokens from $p$ are included in $\reduction(M)$, implying that $t' \in \reduction(M)$.
		Since $w \in \overline{\reduction(M)}^*$ such a $t'$ cannot exist.
		In the second case, due to Condition~\ref{item:enabled-2} all the transitions that can add tokens to $p$ are included in $\reduction(M)$, implying that $t' \in \reduction(M)$.
		Since $w \in \overline{\reduction(M)}^*$ such a $t'$ cannot exist.
		Therefore we must have that $M_t \rtrans[w] M'$ and we can conclude with $M \rtrans[tw] M'$.
		\item[(\ref{rule:reach})]
		Follows from Condition~\ref{item:interesting} and Lemma~\ref{lemma:reach}.
		\item[(\ref{rule:game-1})]
		Let $M \in \markings$ be a marking and $w \in \overline{\reduction(M)}^*$ s.t. $M \rtrans[w] M'$.
		We will show that if $\en_2(M) = \emptyset$ then $\en_2(M') = \emptyset$.
		Assume that $\en_2(M) = \emptyset$.
		Then by Condition~\ref{item:opponent-2} we have $\transitions_2 \subseteq \reduction(M)$.
		Let $t \in \transitions_2$ be a player $2$ transition.
		By Condition~\ref{item:prop-disabled} we know that either there exists $p \in \preset{t}$ s.t. $M(p) < \weights(p,t)$ and $\preincr{p} \subseteq \reduction(s)$, or there exists $p \in \inhibpreset{t}$ s.t. $M(p) \geq \inhib(p,t)$ and $\postdecr{p} \subseteq \reduction(s)$.
		In the first case, in order to enable $t$ at least one transition from $\preincr{p}$ has to be fired.
		However, we know $\preincr{p} \subseteq \reduction(s)$ is true, and therefore none of the transitions in $\preincr{p}$ can occur in $w$, which implies $t \notin \en_2(M')$.
		In the second case, in order to enable $t$ at least one transition from $\postdecr{p}$ has to be fired.
		However, we know $\postdecr{p} \subseteq \reduction(s)$ is true, and therefore none of the transitions in $\postdecr{p}$ can occur in $w$, which implies $t \notin \en_2(M')$.
		These two cases together imply that $\en_2(M') = \emptyset$.
		\item[(\ref{rule:game-2})]
		Follows the same approach as~\ref{rule:game-1}.
		\item[(\ref{rule:safe})]
		Follows from Condition~\ref{item:safe}.
        \item[(\ref{rule:visible})]
        Follows from Condition~\ref{item:visible} and Lemma~\ref{lemma:reach2}.
        Notice that if $\en_1(M) \neq \emptyset$ then the antecedent of Condition~\ref{rule:visible} never holds if $\en_2(M) = \emptyset$ unless $M$ is already a goal marking, or $M$ is a mixed state and the consequent of Condition~\ref{rule:visible} always holds due to Condition~\ref{rule:stub-init}.
		\item[(\ref{rule:dead})]
		Let $M \in \markings$ be a marking and $w \in \overline{\reduction(M)}^*$ s.t. $M \rtrans[w] M'$.
		We will show that if $\en_2(M) \neq \emptyset$ then there exists $t \in \en_2(M) \cap \reduction(M)$ s.t. $t \in \en_2(M')$.
		Assume that $\en_2(M) \neq \emptyset$.
		From Condition~\ref{item:deadlock} we know that there exists $t \in \en_2(M) \cap \reduction(M)$ s.t. $\postdecr{(\preset{t})} \cup \preincr{(\inhibpreset{t})} \subseteq \reduction(M)$.
		Assume for the sake of contradiction that $t \notin \en_2(M')$.
		In this case there must either exist $p \in \preset{t}$ s.t. $M'(p) < \weights(p,t)$, or there exists $p \in \inhibpreset{t}$ s.t. $M'(p) \geq \inhib(p,t)$.
		In the first case, since $t \in \en_2(M)$ we have that $M(p) \geq  \weights(p,t)$.
		Therefore at least one transition from $\postdecr{p}$ has to have been fired.
		However, we know $\postdecr{(\preset{t})} \subseteq \reduction(M)$ is true, and therefore none of the transitions in $\postdecr{p}$ can occur in $w$, which implies $M'(p) \geq  \weights(p,t)$, a contradiction.
		In the second case, since $t \in \en_2(M)$ we have that $M(p) < \inhib(p,t)$.
		Therefore at least one transition from $\preincr{p}$ has to have been fired.
		However, we know $\preincr{(\preset{t})} \subseteq \reduction(M)$ is true, and therefore none of the transitions in $\preincr{p}$ can occur in $w$, which implies $M'(p) < \inhib(p,t)$, a contradiction.
		Therefore $t \notin \en_2(M')$ cannot be true, and we must have that $t \in \en_2(M')$.
	\end{itemize}
	This completes the proof of the theorem.
\end{proof}

In Algorithm~\ref{alg:saturation} we provide a pseudocode
for calculating stubborn sets for a given marking.
It essentially rephrases Theorem~\ref{prop:turn-game-preservation} into an executable code.
The algorithm calls Algorithm~\ref{alg:saturation-2}
that saturates a given set to satisfy
Conditions~\ref{item:prop-disabled} and~\ref{item:prop-enabled} of
Theorem~\ref{prop:turn-game-preservation}.

\begin{algorithm}[t]
	\SetKwInOut{Input}{input}
	\SetKwInOut{Output}{output}
	\Input{A Petri net game $N=\petrituple$ and $M \in \markings$
        and formula $\varphi$}
	\Output{$X \subseteq \transitions$ where $X$ is a stable
          stubborn set for $M$}

	\If{$\en(M) = \emptyset$}{
		\Return{$\transitions$};
	}
	\If{$\en_1(M) \neq \emptyset \land \en_2(M) \neq \emptyset$}{
		\Return{$\transitions$};\label{line:mixed}
	}

	$Y$ := $\emptyset$; 

	\eIf{$\en_1(M) = \emptyset$}{
		\If{$\mathit{reach}(N,M,\varphi)$}{
			\Return{$\transitions$};\label{line:visible}
		}

		Pick any $t \in \en_2(M)$;\label{line:dead-pick}

		$Y$ := $\transitions_1 \cup t \cup \postdecr{(\preset{t})} \cup \preincr{(\inhibpreset{t})}$;\label{line:p1} 

	}{
		$Y$ := $\transitions_2$;\label{line:p2}
	}

	$Y$ := $Y \cup \interest{M}{\varphi}$;\label{line:interesting}

	$X$ := $\mathit{Saturate}(Y)$;

	\If{$X \cap \en_1(M) \nsubseteq \safe{M}$}{
		\Return{$\transitions$};\label{line:safe}
	}

	\Return{$X$};\label{line:terminate}
	\caption{Computation of $\reduction(M)$ for some
     stable reduction $\reduction$}%
    \label{alg:saturation}
\end{algorithm}

\begin{algorithm}
	$X$ := $\emptyset$; 

	\While{$Y \neq \emptyset$} {

		Pick any $t \in Y$;

		\eIf{$t \notin \en(M)$}{
			\eIf{$\exists p \in \preset{t}.\ M(p) < \weights(p,t)$}{
				Pick any $p \in \preset{t}$ s.t. $M(p) < \weights(p,t)$;\label{line:pick-dis-1}

				$Y$ := $Y \cup (\preincr{p} \setminus X$);\label{line:dis-1-add} 
			}{
				Pick any $p \in \inhibpreset{t}$ s.t. $M(p) \geq \inhib(p,t)$;\label{line:pick-dis-2}

				$Y$ := $Y \cup (\postdecr{p} \setminus X$);\label{line:dis-2-add} 
			}
		}{
			$Y$ := $Y \cup ((\postset{(\predecr{t})} \cup \inhibpostset{(\postincr{t})}) \setminus X)$;\label{line:enabled} 
		}

		$X$ := $X \cup \{t\}$;\label{line:retire} 

		$Y$ := $Y \setminus \{t\}$; 
	}
	\Return{$X$};\label{line:final-return}

	\caption{$\mathit{Saturate}(Y)$}%
    \label{alg:saturation-2}
\end{algorithm}

\begin{thm}\label{thm:alg}
	Algorithm~\ref{alg:saturation}
terminates and returns $\reduction(M)$
for some stable reduction $\reduction$.
\end{thm}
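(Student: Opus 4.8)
The plan is to treat termination and correctness separately, reducing correctness to Theorem~\ref{prop:turn-game-preservation}.

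\textbf{Termination.} The only loops are the repeat-until loop inside $\mathit{reach}$ (Algorithm~\ref{alg:reachphi}), which terminates by the argument already used in the proof of Lemma~\ref{lemma:reach2} (the values $\ubound(x)$ are non-increasing across iterations and bounded below), and the \textbf{while} loop of $\mathit{Saturate}$ (Algorithm~\ref{alg:saturation-2}). For the latter I would observe that $X$ grows monotonically, is always contained in the finite set $\transitions$, and that every insertion into $Y$ has the form $(\cdot)\setminus X$, so a transition once moved into $X$ is never re-inserted into $Y$; since each iteration moves exactly one transition from $Y$ to $X$, the loop runs for at most $\card{\transitions}$ iterations. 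Everything else in Algorithm~\ref{alg:saturation} is straight-line code, so the whole algorithm terminates.

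\textbf{Correctness.} Fix a deterministic resolution of the ``Pick any'' steps (e.g.\ by a fixed total order on $\places\cup\transitions$) and let $\reduction$ be the function mapping each marking $M$ to the output of Algorithm~\ref{alg:saturation} on $(N,M,\varphi)$. By Theorem~\ref{prop:turn-game-preservation} it suffices to verify that $\reduction$ satisfies Conditions~\ref{item:contested}--\ref{item:visible} at every $M$, which I would do by case analysis on which \textbf{return} statement fires. Whenever the algorithm returns $\transitions$ --- i.e.\ when $\en(M)=\emptyset$, when $M$ is mixed (line~\ref{line:mixed}), when $\en_1(M)=\emptyset$ and $\mathit{reach}(N,M,\varphi)$ holds (line~\ref{line:visible}), or when the test at line~\ref{line:safe} fails --- all nine conditions hold trivially, since every consequent is a containment of the form $(\cdot)\subseteq\reduction(M)$ (or an existential over $\reduction(M)$ whose antecedent already supplies the witness) and $\reduction(M)=\transitions$. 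In the two remaining cases $\reduction(M)=\mathit{Saturate}(Y)$ for the set $Y$ assembled at lines~\ref{line:p1}/\ref{line:p2} and~\ref{line:interesting}. Here I would first record two facts about $\mathit{Saturate}$: \emph{(i)} $\mathit{Saturate}(Y)\supseteq Y$, because the loop halts only when $Y$ is emptied and every transition leaving $Y$ enters $X$; and \emph{(ii)} for each $t\in\mathit{Saturate}(Y)$ its saturation step has been carried out, so that if $t\notin\en(M)$ then either $\preincr{p}\subseteq\reduction(M)$ for some $p\in\preset{t}$ with $M(p)<\weights(p,t)$, or $\postdecr{p}\subseteq\reduction(M)$ for some $p\in\inhibpreset{t}$ with $M(p)\geq\inhib(p,t)$, and if $t\in\en(M)$ then $\postset{(\predecr{t})}\cup\inhibpostset{(\postincr{t})}\subseteq\reduction(M)$. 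Facts~\emph{(i)}--\emph{(ii)} give Conditions~\ref{item:prop-disabled} and~\ref{item:prop-enabled} directly, and Condition~\ref{item:interesting} follows from line~\ref{line:interesting} with~\emph{(i)}. In the player-1 branch ($\en_1(M)\neq\emptyset$, $\en_2(M)=\emptyset$, line~\ref{line:p2}): Condition~\ref{item:opponent-2} follows from $\transitions_2\subseteq Y$; Condition~\ref{item:safe} holds because reaching line~\ref{line:terminate} means the test at line~\ref{line:safe} failed, i.e.\ $\en_1(M)\cap\reduction(M)\subseteq\safe{M}$; and Conditions~\ref{item:contested},~\ref{item:opponent-1},~\ref{item:deadlock} and~\ref{item:visible} hold vacuously. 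In the player-2 branch with $\mathit{reach}$ false ($\en_1(M)=\emptyset$, hence $\en_2(M)=\en(M)\neq\emptyset$): Condition~\ref{item:opponent-1} follows from $\transitions_1\subseteq Y$; Condition~\ref{item:deadlock} is witnessed by the transition $t$ picked at line~\ref{line:dead-pick}, since $\{t\}\cup\postdecr{(\preset{t})}\cup\preincr{(\inhibpreset{t})}\subseteq Y\subseteq\reduction(M)$; and Conditions~\ref{item:contested},~\ref{item:safe},~\ref{item:opponent-2} and~\ref{item:visible} hold vacuously. Theorem~\ref{prop:turn-game-preservation} then yields that $\reduction$ is stable.

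\textbf{Main obstacle.} The delicate point is fact~\emph{(ii)}: in $\mathit{Saturate}$ the insertions into $Y$ are taken relative to the \emph{current} value of $X$, not the final one, so one must argue that the whole set $\preincr{p}$ (resp.\ $\postdecr{p}$, $\postset{(\predecr{t})}$, $\inhibpostset{(\postincr{t})}$) nonetheless ends up in the final $X$. The clean way is the invariant that any transition ever placed in $Y$ is eventually moved into $X$; hence, when $t$ is processed, each member of the relevant set is either already in $X$ or is inserted into $Y$ at that moment and thus reaches $X$ afterwards, and since $X$ only grows, the required containment holds. The only other thing needing care is the routine bookkeeping that, in each branch, the ``return $\transitions$'' exits together with the antecedent-vacuity observations jointly account for all nine conditions.
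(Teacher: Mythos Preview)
Your proposal is correct and follows essentially the same approach as the paper: reduce correctness to Theorem~\ref{prop:turn-game-preservation} and argue termination of $\mathit{Saturate}$ via the monotone growth of $X$ within the finite set $\transitions$, with the key invariant being that every transition ever placed in $Y$ eventually lands in $X$. Your organisation by return-case rather than by condition-number, and your explicit attention to fixing the nondeterministic ``Pick any'' choices so that $\reduction$ is a well-defined function, are mild presentational refinements rather than substantive differences.
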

\begin{proof}
	\textit{Termination}. If $\en_1(M) \neq \emptyset$ and $\en_2(M) \neq \emptyset$ then we terminate in line~\ref{line:mixed}.
	Otherwise $Y \neq \emptyset$ and we enter the while-loop in Algorithm~\ref{alg:saturation-2}.
	Notice that $X \cap Y = \emptyset$ is always the case in the execution of Algorithm~\ref{alg:saturation-2}.
	We never remove transitions from $X$ after they have been added.
	Therefore, since in line~\ref{line:retire} of Algorithm~\ref{alg:saturation-2} a new transition is added to $X$ at the end of each loop iteration, the loop can iterate at most once for each transition.
	Since $T$ is finite by the Petri Net Game definition, the loop iterates a finite number of times, and Algorithm~\ref{alg:saturation-2} terminates.
	If $\en_1(M) \cap X \nsubseteq \safe{M}$ then we terminate in line~\ref{line:safe} of Algorithm~\ref{alg:saturation}, and otherwise we return in line~\ref{line:terminate} and Algorithm~\ref{alg:saturation} terminates.

	\textit{Correctness}. It was shown that the construction in Theorem~\ref{prop:turn-game-preservation} results in a set that is a stubborn set of a stable reduction.
	 It is therefore sufficient to show that Algorithm~\ref{alg:saturation} replicates the construction.
	 Notice that every transition that is added to $Y$ is eventually added to $X$ in line~\ref{line:retire} and returned in line~\ref{line:final-return} of Algorithm~\ref{alg:saturation-2}.
	 Let $t \in Y$ and we discuss that all conditions of Theorem~\ref{prop:turn-game-preservation} hold upon termination.
	\begin{itemize}
\item
	 Condition~\ref{item:contested}: If $\en_1(M) \neq \emptyset$ and $\en_2(M) \neq \emptyset$ then we return $\transitions$ in line~\ref{line:mixed} of Algorithm~\ref{alg:saturation}.
\item
	 Condition~\ref{item:safe}: If $\en_1(M) \cap \reduction(M) \nsubseteq \safe{M}$ then we return $\transitions$ in line~\ref{line:safe} of Algorithm~\ref{alg:saturation}.
\item
	 Condition~\ref{item:interesting}: We have $\interest{M}{\varphi} \subseteq Y$ in line~\ref{line:interesting} of Algorithm~\ref{alg:saturation}.
\item
	 Condition~\ref{item:opponent-1}: We have $\transitions_1 \subseteq Y$ in line~\ref{line:p1} of Algorithm~\ref{alg:saturation}.
\item
	 Condition~\ref{item:opponent-2}: We have $\transitions_2 \subseteq Y$ in line~\ref{line:p2} of Algorithm~\ref{alg:saturation}.
\item
	 Condition~\ref{item:disabled-1}: In line~\ref{line:pick-dis-1} we pick any $p \in \preset{t}$ s.t. $M(p) < \weights(p,t)$, and in line~\ref{line:dis-1-add} of Algorithm~\ref{alg:saturation-2} we add $\preincr{p}$ to $Y$.
\item
	 Condition~\ref{item:disabled-2}: In line~\ref{line:pick-dis-2} we pick any $p \in \inhibpreset{t}$ s.t. $M(p) \geq \inhib(p,t)$, and in line~\ref{line:dis-2-add} of Algorithm~\ref{alg:saturation-2} we add $\postdecr{p}$ to $Y$.
\item
	 Condition~\ref{item:enabled-1}: In line~\ref{line:enabled} of Algorithm~\ref{alg:saturation-2} we add $\postset{(\predecr{t})}$ to $Y$.
\item
	 Condition~\ref{item:enabled-2}: In line~\ref{line:enabled} of Algorithm~\ref{alg:saturation-2} we add $\inhibpostset{(\postincr{t})}$ to $Y$.
\item
	 Condition~\ref{item:deadlock}: In line~\ref{line:dead-pick} of Algorithm~\ref{alg:saturation} we pick any $t' \in \en_2(M)$ and in line~\ref{line:p1} we add $\postdecr{(\preset{t})} \cup \preincr{(\inhibpreset{t})}$ to $Y$.
\item
     Condition~\ref{item:visible}: If $\en_1(M) = \emptyset$ and $\mathit{reach}(N,M,\varphi)=\mathit{true}$ then we return $T$ at line~\ref{line:visible} of Algorithm~\ref{alg:saturation}.
     \qedhere
\end{itemize}
\end{proof}

\begin{rem}
In the actual implementation of the algorithm, we first saturate
only over the set of interesting transitions and in the case that
$\mathit{Saturate}(\interest{M}{\varphi}) \cap \en(M) = \emptyset$,
we do not explore  any of the successors of the marking $M$
as we know that no goal marking can be reached from $M$ (this
follows from Lemma~\ref{lemma:early-termination}).
\end{rem}

\section{Implementation and Experiments}
We extend the Petri net verification engine
\texttt{verifypn}~\cite{jensen2016tapaal}, a part
of the TAPAAL tool suite~\cite{david2012tapaal}, to experimentally
demonstrate the viability of our approach.
The synthesis algorithm for solving Petri net games
is an adaptation of the dependency graph fixed-point computation
from~\cite{jensen2018discrete,jensen2016real} that we reimplement
in \texttt{C++} while utilising PTries~\cite{JLS:ICTAC:17}
for efficient state storage. The source code is available
under GPLv3~\cite{repeatability}.
We conduct a series of experiments using the following scalable case studies.
\begin{itemize}
\item 
  In \emph{Autonomous Intersection Management} (AIM) vehicles move
  at different speeds towards an intersection and we want to ensure the
  absence of collisions. We model the problem as a Petri net
  game 
  and refer to each instance as AIM-$W$-$X$-$Y$-$Z$ where
  $W$ is the number of intersections with lanes of length $X$, $Z$ is the number
  of cars, and $Y$ is the number of different speeds for each car.
  The controller assign speeds to cars while the
  environment aims to cause a collision. The goal marking is where
  all cars reach their destinations while there are no collisions.
\item We reformulate the classical \emph{Producer Consumer System} (PCS)
  as a Petri net game. In each instance PCS-$N$-$K$ the
  total of $N$ consumers
 (controlled by the environment) and $N$ producers
 (controlled by the controller) share $N$ buffers.
 Each consumer and producer has a fixed buffer
 to consume/produce from/to, and each consumer/producer has $K$
 different randomly chosen consumption/production rates.
 The game alternates in rounds where the players choose for each
 consumer/producer appropriate buffers and rates.
 The goal of the game is to ensure that the consumers have always
 enough products in the selected buffers while at the same time the
 buffers have limited capacity and may not overflow.
\item The \emph{Railway Scheduling Problem} contains four instances
 modeling the Danish train station Lyngby and three of its smaller
 variants. The scheduling problem, including the station layout,
 was originally described as a game
 in~\cite{kasting2016synthesis} and each instance is annotated by
 a number
$N$ representing the number of trains that migrate through the railway
 network. The controller controls the lights and switches, while
 the environment moves the trains. The goal of the controller is to
 make sure that all trains reach (without any collisions)
 their final destinations.
\item The \emph{Nim} (NIM-$K$-$S$) Petri net game was described
 in~\cite{T:CIMCA:08} as a two player game where the players
 in rounds repeatedly remove between $1$ and $K$ pebbles from an initial stack
 containing $S$ pebbles.
 The player that has a turn and an empty stack
 of pebbles loses. In our (equivalent) model,
 we are instead adding pebbles to an initially empty stack and the player
 that first adds to or above the given number $S$ loses.
\item The \emph{Manufacturing Workflow} (MW) contains
 instances of a software product line Petri net model presented
 in~\cite{QUINTANILLA2013342}. The net describes a series of possible
 ways of configuring a product (performed by the environment)
 while the controller aims to construct a requested product.
 The model instance MW-$N$ contains $N$ possible choices of product
 features.
\item The \emph{Order Workflow} (OW) Petri net game model is taken
 from~\cite{10.1007/978-3-540-30468-511} and the goal of the game
 is to synthesise a strategy that guarantees workflow soundness,
 irrelevant of the choices made by the environment.
 We scale the workflow by repeatedly re-initialising the workflow
 $N$ times (denoted by OW-$N$).
\item In \emph{Flexible Manufacturing Systems} (FMS) we use the Petri net models
 from~\cite{LZ:04,AE:98} modeling different production lines with shared resources.
 The Petri nets FMS-D~\cite{AE:98} and FMS-C~\cite{LZ:04} both contain a
 deadlock and the problem is to control a small subset of transitions
 so that the deadlock can be avoided. The models are scaled by the number
 of resources and products in the line. The goal in the FMS-N~\cite{LZ:04} model is to
 control a subset of transitions in the net in order to guarantee that a given
 resource (Petri net place) never becomes empty.
\end{itemize}

\noindent
All experimental evaluation is run on AMD Epyc 7551  Processors
with 110 GB memory limitation and 12 hours timeout (we measure only
the execution time without the parsing time of the models).
We use for all experiments the depth first search strategy and
we only report the examples where the algorithms both with and without
partial order reduction returned a result within the time and memory limits.
We provide a reproducibility
package with all models and experimental data~\cite{repeatability}.

\subsection*{Results}
\begin{table}[hbtp]
\centering
\begin{tabular}{lrrrrrr}

& \multicolumn{2}{c}{Time (seconds)} &\multicolumn{2}{c}{Markings $\times1000$} & \multicolumn{2}{c}{Reduction}  \\
Model& NORMAL & POR & NORMAL & POR& \%Time & \%Markings \\\toprule

AIM-13-100-6-11&54.15&19.86&1702&510&63&70\\ 
AIM-13-100-6-16&76.07&28.71&2464&740&62&70\\ 
AIM-13-150-9-16&162.10&115.30&3696&2455&29&34\\ 
AIM-13-150-9-21&212.80&153.00&4853&3331&28&31\\ 
AIM-14-150-9-16&200.30&142.90&4259&2865&29&33\\ 
AIM-15-150-9-16&243.30&172.50&4861&3205&29&34\\ 
\midrule

PCS-2-3&49.71&37.86&13660&9839&24&28\\ 
PCS-2-4&181.50&126.60&37580&25625&30&32\\ 
PCS-2-5&488.40&331.90&84059&55049&32&35\\ 
PCS-2-6&1226.00&756.50&164096&104185&38&37\\ 
\midrule

LyngbySmall2&1.47&0.01&359&2&99&99\\ 
LyngbySmall3&9.79&5.59&2118&1165&43&45\\ 
LyngbySmall4&60.66&45.32&11605&7407&25&36\\ 
Lyngby2&1440.00&116.00&137169&11213&92&92\\ 
\midrule

NIM-5-49500&3.88&1.24&1635&595&68&64\\ 
NIM-7-49500&14.40&1.78&5282&753&88&86\\ 
NIM-9-49500&65.36&2.42&17326&963&96&94\\ 
NIM-11-49500&326.40&3.28&59491&1167&99&98\\ 
\midrule

MW-20&18.03&0.02&4333&4&100&100\\ 
MW-30&93.71&0.04&14643&6&100&100\\ 
MW-40&311.50&0.06&34733&9&100&100\\ 
MW-50&795.30&0.09&67869&11&100&100\\ 
MW-60&1749.00&0.13&117313&13&100&100\\ 
\midrule

OW-100000&4.20&3.25&2300&1800&23&22\\ 
OW-1000000&52.04&40.37&23000&18000&22&22\\ 
OW-10000000&591.30&435.60&230000&180000&26&22\\ 
\midrule

FMS-D-4&40.81&35.38&7145&6682&13&6\\ 
FMS-D-5&98.74&90.98&15735&15156&8&4\\ 
FMS-D-6&170.20&159.90&25873&25265&6&2\\ 
FMS-D-7&246.80&238.20&36569&35918&3&2\\ 
\midrule

FMS-C-300&218.50&105.40&24730&16103&52&35\\ 
FMS-C-400&349.50&170.40&32877&21411&51&35\\ 
FMS-C-500&471.30&239.90&41026&26718&49&35\\ 
FMS-C-600&587.10&285.60&49173&32024&51&35\\ 
\midrule

FMS-N-9000&52.84&16.69&10423&8579&68&18\\ 
FMS-N-29000&201.90&64.08&33583&27639&68&18\\ 
FMS-N-49000&372.50&119.00&56743&46699&68&18\\ 
FMS-N-69000&538.70&179.70&79903&65759&67&18\\ 
\end{tabular}
\caption{Experiments with and without partial order reduction (POR and NORMAL)}%
\label{tab:results}
\end{table}

Table~\ref{tab:results} shows the experimental evaluation, displaying the
relative gain in computation time (in seconds)  without (NORMAL) and with (POR) partial order
reduction as well in the number of unique markings (in thousands) that were stored during the
fixed-point computation on the constructed dependency graph.
The results demonstrate significant reductions across all models, in some
cases like in NIM and MW even of several degrees of magnitude due to the exponential
speed up when using partial order reduction.
The case studies FMS-N and FMS-C show a large and consistent reduction in time across
all instance sizes.
Other models like AIM, PCS, OW and FMS-D show a moderate but significant reduction.
We observe that the time reduction is generally only few percent
different from the reduction in the number of explored markings,
indicating only a few percent overhead for computing (on-the-fly)
the stubborn sets. In the FMS-C and in particular the FMS-N model we can see that
we achieve significantly larger reduction in running time than in the reduced number
of stored markings. This is caused by the fact that the partial order reduction
reduces also the number of possible paths in which a certain marking can be discovered.

Our partial order technique noticably speeds up the computation in Lyngby2 model, but there
are also two instances of the LyngbySmall models where the reduction both in time
and size of the state space is less significant.
We conjecture that this is because the search strategy changes when partial order
reduction is applied and this results in the fact that we have to search in these two instances
a larger portion of the generated dependency graph before we obtain a conclusive answer.
Nevertheless, in general the experiments confirm the  high practical applicability
of partial order reduction for 2-player games with only minimal overhead
for computing the stubborn sets.
exponential


\section{Conclusion}
We generalised the partial order reduction technique based on
stubborn sets from plain reachability to a game theoretical
setting. This required a nontrivial extension of the classical
conditions on stubborn sets so
that a state space reduction can be achieved for both players
in the game. In particular, the computation of the stubborn sets
for player 2 (uncontrollable transitions) needed a new technique for interval approximation on the number
of tokens in reachable markings.
We proved the correctness of our approach and instantiated it to the case
of Petri net games. We provided (to the best of our knowledge) the first
implementation of partial order reduction for Petri net games
and made it available as a part of the model checker TAPAAL\@.
The experiments show promising results on a number of case studies,
achieving in general a substantial state space reduction with only a
small overhead for computing the stubborn sets. In the future work,
we plan to combine our contribution with a recent insight on how to
effectively use partial order reduction in the timed
setting~\cite{boenneland2018start} in order to extend our framework
to general timed games.

\textbf{Acknowledgments.}
We are grateful to Thomas Neele from Eindhoven University of Technology for
letting us know about the false claim in Lemma~\ref{lemma2} that was
presented in the conference version of this article.
The counterexample, presented in Remark~\ref{rem:error}, is attributed to him.
We are obliged to Antti Valmari for noticing that condition {\bf C}
in our conference paper is redundant and
can be substituted by conditions {\bf W} and {\bf D}, as it is done in this article.
We also thank the anonymous reviewers for their numerous suggestions that helped us
to improve the quality of the presentation.
The research leading to these results has received funding from the
project \mbox{DiCyPS} funded by the Innovation Fund Denmark,
the ERC Advanced Grant LASSO and DFF project QASNET\@.

\bibliographystyle{alpha}
\bibliography{references}{}
\end{document}